\documentclass[preprint]{elsarticle}

\usepackage{booktabs} 
\usepackage{subcaption}
\usepackage{multirow}
\usepackage{amsmath}
\usepackage{fancyvrb}
\usepackage{color}

\usepackage[margin=1in]{geometry}

\usepackage{paralist, tabularx}

\usepackage{hyperref}


\usepackage{multirow}
\usepackage[numbers]{natbib}
\newtheorem{definition}{Definition}  
\newtheorem{proof}{Proof}    
\newtheorem{theorem}{Theorem}   
\usepackage{algorithm}
\usepackage{algorithmic}

\usepackage{makecell} 


\journal{Information Sciences}









\bibliographystyle{plain} 

\begin{document}

\begin{frontmatter}

\title{TKUS: Mining Top-$k$ High-Utility Sequential Patterns}


\author[1]{Chunkai Zhang}
\ead{ckzhang@hit.edu.cn}
\author[1]{Zilin Du}
\ead{yorickzilindu@gmail.com}
\author[2]{Wensheng Gan\corref{ca}} 
\ead{wsgan001@gmail.com}
\cortext[ca]{Corresponding author}

\author[3]{Philip S. Yu}
\ead{psyu@uic.edu}

\address[1]{Department of Computer Science and Technology, Harbin Institute of Technology (Shenzhen), Shenzhen 518055, China}
\address[2]{College of Cyber Security, Jinan University, Guangzhou 510632, China}
\address[3]{Department of Computer Science, University of Illinois at Chicago, IL 60616, USA}

\begin{abstract}
High-utility sequential pattern mining (HUSPM) has recently emerged as a focus of intense research interest. The main task of HUSPM is to find all subsequences, within a quantitative sequential database, that have high utility with respect to a user-defined minimum utility threshold. However, it is difficult to specify the minimum utility threshold, especially when database features, which are invisible in most cases, are not understood. To handle this problem, top-$k$ HUSPM was proposed. Up to now, only very preliminary work has been conducted to capture top-$k$ HUSPs, and existing strategies require improvement in terms of running time, memory consumption, unpromising candidate filtering, and scalability. Moreover, no systematic problem statement has been defined. In this paper, we formulate the problem of top-$k$ HUSPM and propose a novel algorithm called TKUS. To improve efficiency, TKUS adopts a projection and local search mechanism and employs several schemes, including the Sequence Utility Raising, Terminate Descendants Early, and Eliminate Unpromising Items strategies, which allow it to greatly reduce the search space. Finally, experimental results demonstrate that TKUS can achieve sufficiently good top-$k$ HUSPM performance compared to state-of-the-art algorithm TKHUS-Span.
\end{abstract}

\begin{keyword}
 utility mining \sep sequence data \sep pattern mining \sep top-$k$ \sep high-utility sequence
\end{keyword}

\end{frontmatter}

\section{Introduction}

We are currently in the age of big data. Sequential pattern mining (SPM), which has been very popular since it was first proposed \cite{agrawal1995mining} in the early 1990s, has been successfully applied to many realistic scenarios, such as bioinformatics \cite{wang2007frequent}, consumer behavior analysis \cite{srikant1996mining}, and webpage click-stream mining \cite{fournier2012using}. The goal of SPM is to extract all frequent sequences (as sequential patterns) from a sequence database with respect to a user-defined minimum threshold called "support". In recent years, multiple approaches \cite{fournier2017survey,han2001prefixspan} have been developed to achieve this goal by efficiently discovering patterns reflecting the potential connections within items.

In a frequency-oriented pattern mining framework \cite{fournier2017survey,han2001prefixspan}, where the frequency is the only metric for a sequence, many infrequent but crucial patterns are likely to be missed. In other words, most of the patterns selected by SPM algorithms are uninformative because the frequency of a pattern does not always fully correspond to its significance (i.e., profit, interest) \cite{truong2019survey}. Consider a recommendation system in an electronics store; clearly, the main task of the system is to generate more benefits for the business. As is well known, the unit profit from the sale of luxury goods, such as large screen OLED TV, is much greater than that of everyday supplies, such as batteries; however, the former are sold in much lower volumes than the latter. In this scenario, the system will tend to emphasize the patterns that lead to the purchase of luxury goods, yielding a higher profit for revenue maximization instead of the frequent purchases of everyday supplies. To deal with this problem in the frequency-oriented pattern mining framework, the concept of utility was incorporated and high-utility itemset mining (HUIM) \cite{gan2018survey,yao2004foundational} belonging to the utility-oriented framework, which considers the relative importance of items \cite{gan2018surveyuo}, was developed. HUIM has been widely researched. It has been found that although HUIM methods are able to gain valuable information in certain practical applications, they are incapable of addressing sequential databases, where each item has a timestamp. To deal with this problem, high-utility sequential pattern mining (HUSPM) \cite{shie2011mining,yin2012uspan,zhang2019two} was developed, and it has become an emerging topic of interest in the domain of knowledge discovery in databases (KDD) \cite{fayyad1996data,frawley1992knowledge}.

\begin{figure}[ht]
	\centering
	\includegraphics[clip,scale=0.4]{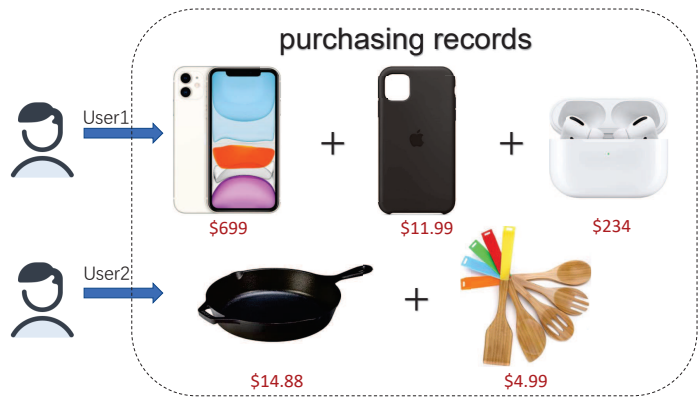}
	\caption{Example of purchasing records in a consumer retail store}
	\label{retail}
\end{figure}

HUSPM can be applied in many common scenarios. Consider the real-life example of a consumer retail store as shown in Figure \ref{retail}, where users purchase a series of items, each of which has a corresponding unit profit, at different times. These consumers' purchasing behaviors make up a large-scale transaction database containing a large amount of underlying knowledge, which can be discovered by HUSPM, for decision-making. In contrast to other pattern mining tasks, HUSPM emerged quickly and had attracted much attention. To represent the relative importance of patterns in the HUSPM problem, each item in the mining object, called a quantitative sequence database, is associated with a positive value called internal utility, which generally represents the number of occurrences (e.g., the number of the items purchased by a customer in one visit). Moreover, each kind of item appearing in the database is associated with a special value called external utility, which indicates the item's relative importance (e.g., the unit profit of the item). The main task of HUSPM is to find all subsequences with high utility, i.e., high-utility sequential patterns (HUSPs), in a quantitative sequential database with respect to a user-defined minimum utility threshold. Specifically, compared to the aforementioned SPM and HUIM, HUSPM considers the chronological ordering of items as well as utility values associated with them \cite{gan2020fast}, which makes it a much more challenging and complex problem. Many researchers have proposed algorithms \cite{gan2020proum,zhang2019two,zhang2019efficient} with novel pruning strategies and data structures to efficiently mine HUSPs.

However, HUSPM has a limitation in identifying HUSPs that contain valuable information: it is difficult for users to specify the minimum utility threshold, especially when they are not familiar with the database's features, such as the average number of items per itemset, total number of sequences, and distribution of utilities, which are customarily invisible to the user. For example, if the threshold is set to a too small value, we may discover too many HUSPs with unimportant and redundant information, whereas if the threshold is too large,  we may capture only a few HUSPs that cannot provide sufficient information. Furthermore, given the same threshold, one database may yield millions of HUSPs, whereas another may yield none. It is time-consuming for decision makers to fine-tune the threshold to extract the proper number of patterns for their intended purpose. An interesting but difficult question arises from this: how do we mine an appropriate number of HUSPs? Top-$k$ HUSPM was proposed to answer this question \cite{yin2013efficiently}. For instance, Figure \ref{retail} can be considered as a motivated real-world application of top-$k$ utility mining. In general, users are more inclined to find out the top-$k$ profitable products instead of hundreds of thousands of results in retail store. There is a need to analyze sales data to establish sales strategies related to retail benefits such as inventory preparation, product arrangement, and promotion. If there are three available promotion positions on the shelf, then top-3 HUSPM technology can be used to discover the three patterns with the highest utility. Assume one of the patterns is $<$\{\textit{ham} \textit{cheese}\}, \{\textit{milk}\}$>$, decision-makers can put \textit{ham} and \textit{cheese} on sale and then arrange \textit{milk} into the promotion position for cross-marketing based on the mining results. Furthermore, it is necessary to perform a rapid analysis with respect to huge sales databases during non-opening hours for their smooth running according to plan \cite{ryang2015top}. With no requirement for a predefined minimum utility threshold, top-$k$ HUSPM selects patterns with the top-$k$ highest utilities; it is inspired by top-$k$ SPM \cite{tzvetkov2005tsp} and top-$k$ HUIM \cite{wu2012mining}. Different from classic HUSPM, top-$k$ HUSPM is simpler and more user-friendly because setting the value of $k$, the number of desired HUSPs, is more straightforward than specifying the minimum utility threshold which requires domain knowledge.

In practice, top-$k$ HUSPM can play a significant role in many real-life applications, such as web log mining \cite{ahmed2010mining}, gene regulation analysis in bioinformatics \cite{zihayat2016top}, mobile computing \cite{shie2011mining}, and cross-marketing in retail stores \cite{yin2013efficiently}. Some typical challenges faced by top-$k$ HUSPM are listed here.

\begin{itemize}
	\item
	Frequency of patterns in a frequency-oriented framework are monotone, but the download closure property is not held in the utility-oriented framework. Therefore, it is computationally infeasible to reduce the search space with existing SPM pruning strategies.
	
	\item 	Compared to top-$k$ HUIM, top-$k$ HUSPM is intrinsically better equipped to face a critical combinatorial explosion of the search space. This is because the sequential ordering of items leads to various possibilities of concatenation in a quantitative sequential database. This means that HUSPM must check more candidates than HUIM, which can lead to high computational complexity without powerful pruning strategies.
	
	\item	With the purpose of guaranteeing algorithm completeness, i.e., missing no top-$k$ pattern, the minimum utility threshold must be increased from a very low value (very close or equal to zero) because the threshold is not specified in advance. Increasing the threshold as fast as possible with efficient strategies is very challenging in top-$k$ HUSPM.
\end{itemize}

Up to date, only very preliminary work \cite{wang2016efficiently, yin2013efficiently, zihayat2016top} has been conducted to capture top-$k$ HUSPs. The research topic is still at a very early stage of development, and existing strategies require significant improvements in terms of running time, memory consumption, unpromising candidates filtering, and scalability. Moreover, no systematic problem statement is defined. Therefore, in this paper, we formulate the problem of top-$k$ HUSPM and propose a novel algorithm called TKUS. The major contributions of this study can be summarized as follows:

\begin{itemize}
	\item 	We address the concept of top-$k$ HUSP by considering not only frequency but also utility value. We also formulate the problem of top-$k$ HUSPM. In particular, important notations and concepts of top-$k$ HUSPM are defined.
	
	\item	With the purpose of overcoming the challenges mentioned earlier, a novel algorithm called TKUS is proposed. To ensure that all top-$k$ HUSPs are found, we investigate the Sequence Utility Raising (SUR) strategy to increase the minimum utility threshold quickly. For further efficiency improvement, we adopt two utility upper bounds and design two companion pruning strategies: Terminate Descendants Early (TDE) and Eliminate Unpromising Items (EUI).
	
	\item 	Extensive experiments using various algorithms on both real-world and synthetic datasets demonstrate that the proposed TKUS has excellent performance in terms of runtime, memory usage, unpromising candidate filtering, and scalability. In particular, experimental results comparing our proposed scheme with state-of-the-art algorithm TKHUS-Span demonstrate that TKUS achieves sufficiently good performance for top-$k$ HUSPM compared to TKHUS-Span.
\end{itemize}

The remainder of this paper is organized as follows. Related work is briefly reviewed in Section 2. Then, the top-$k$ HUSPM problem is formulated in Section 3 alongside related definitions. The proposed TKUS algorithm is presented with several strategies and data structures in Section 4. Experimental results are presented and evaluated in Section 5. Finally, Section 6 concludes the paper, and future work is also discussed.

\section{Related Work}

All research work related to this topic can be divided into three categories: (1) SPM algorithms, (2) HUSPM algorithms, and (3) key algorithms for mining top-k high-utility patterns. 

\subsection{Sequential Pattern Mining}

Agrawal and Srikant \cite{agrawal1995mining} presented the first definition of SPM when considering customer consumption records. They also presented a simple algorithm called AprioriAll based on the Apriori property \cite{agrawal1994fast}. Resembling AprioriAll in its mining principle, GSP, which greatly outperforms AprioriAll, was introduced by Srikant and Agrawal \cite{srikant1996mining}. However, the drawback of GSP is that it traverses the original database repeatedly to calculate the support of candidate patterns, which incurs very high computational costs. Later,  an alternative algorithm called SPADE \cite{zaki2001spade} with a vertical database representation is developed to resolve the repetitive scanning problem. SPADE is able to decompose the original search space into smaller pieces that are independently solved according to combinatorial properties; this efficiently reduces the amount of scanning required. The excellent search schemes result in only three database scans, or even one single scan in the case of SPADE, which minimizes the I/O costs. Similar to SPADE in adopting a vertical database, SPAM \cite{ayres2002sequential} can extract very long sequential patterns with a novel depth-first search strategy. As is well known, algorithms will generate a large number of candidates when handling dense datasets because of combinatorial explosion, making them ineffective. Therefore, Yang \textit{et al.} \cite{yang2007lapin} developed LAPIN using a straightforward but innovative idea that the last occurrence position of an item determines whether to continue concatenating candidates or not. All of the aforementioned algorithms can be considered Apriori-based algorithms.

Note that these algorithms have a disadvantage of generating many unpromising candidates that have no possibility of appearing in the database, similar to other Apriori-based algorithms. To solve this problem, a series of pattern growth algorithms have been proposed. For instance, the high-efficiency FreeSpan, which was developed by Han \textit{et al.} \cite{han2000freespan}, adopts projected sequential databases built recursively by frequent items to grow pattern fragments. For further improvement, they proposed an improved version called PrefixSpan \cite{han2001prefixspan} with two projection strategies called level-by-level projection and bi-level projection. PrefixSpan projects only corresponding postfix subsequences into the projected database, which can greatly decrease the scope of scanning, allowing it to run fast, especially when the desired sequential patterns are numerous and/or long. As a parallelized version of PrefixSpan utilizing MapReduce, Sequence-Growth \cite{liang2015sequence} adopts a lexicographical order to construct candidates with a breadth-wide support-based approach called lazy mining. Moreover, some efficient data structures, such as Web access pattern tree (WAP-tree) \cite{pei2000mining} and its preorder linked and position coded version \cite{ezeife2005plwap}, have been proposed to compress databases and improve efficiency.

However, pattern growth SPM algorithms have the crucial limitation that recursively building projected databases incurs high computational costs. Consequently, several early pruning strategies have been designed to avoid constructing projected storing structures of unpromising sequences. For example, DISC-all \cite{chiu2004efficient} adopts a novel pruning strategy called DISC to remove unpromising patterns early based on sequences of the same length. Subsequently, Chen \cite{chen2009updown} proposed a novel data structure called UpDown Directed Acyclic Graph (UDDAG), which results in fewer levels of recursion and faster pattern growth when constructing projected databases. UDDAG scales up much better than PrefixSpan and can be extended to applications with large search spaces. More details on SPM can be found in literature reviews \cite{fournier2017survey,gan2019survey}.

\subsection{High-Utility Sequential Pattern Mining}

The most important metric of SPM algorithms is frequency. However, frequency does not directly correspond to significance under any circumstances. To address this problem, SPM was generalized to HUSPM \cite{ahmed2010novel}, whose goal is to find all HUSPs in a quantitative sequential database with respect to a user-defined minimum utility threshold. With not only a utility attached to each item but also a timestamp, HUSPM has played a key role in various applications \cite{ahmed2010mining,shie2011mining,zihayat2016top}. Up to now, multiple HUSPM algorithms have been developed \cite{ahmed2010novel,shie2011mining,yin2012uspan,zhang2019two}, and it has become easy to obtain HUSPs using various optimization methods, such as efficient pruning strategies and highly compressed data representations. Ahmed \textit{et al.} \cite{ahmed2010novel}  were the first to incorporate the concept of utility into SPM. Along with defining the problem, they designed a new mining framework to find a complete set of HUSPs, where both internal and external utilities are considered. Furthermore, they proposed two two-phase algorithms called Utility Level (UL) \cite{ahmed2010novel} and Utility Span (US) \cite{ahmed2010novel} based on candidate generation and pattern growth approaches, respectively, adopting an upper bound called \textit{SWU}. Compared to the straightforward UL algorithm, US generates no candidates in the mining process.

There are two main drawbacks to the aforementioned algorithms. One is that algorithms generate many sequences with high \textit{SWU} values, which consumes considerable main memory in the first phase. The other is that scanning the database to calculate the utility of candidates incurs very high computational costs. To address these problems, subsequent studies \cite{yin2012uspan,wang2016efficiently} have adopted a prefixed tree, where each node denotes a candidate sequence (except the root, which is null), and its child nodes can be extended from it by one extension operation. One exception to this is the one-phase algorithm UM-Span \cite{shie2011mining}, which is applied in the real-life situation of planning mobile commerce environments. Due to the high complexity of sequential mobile transactions, tree-based algorithms have poor performance in such a scenario because they must construct a complex tree structure. UM-Span improves efficiency and can overcome the bottleneck of utility mining because it avoids additional scans to identify HUSPs by using a projected database-based approach.

All aforementioned HUSPM algorithms adopt the upper bound \textit{SWU}, which is very loose and still generates a large number of candidate sequences. In view of this, Yin \textit{et al.} \cite{yin2012uspan} designed a generic pattern selection framework and introduced an efficient algorithm called USpan, which utilizes two pruning strategies (width and depth) and a tree-based structure called the lexicographic $q$-sequence tree (LQS-Tree) to represent the search space. The width pruning strategy uses \textit{SWU} to remove unpromising items, whereas the depth pruning strategy uses  the \textit{SPU} upper bound to stop USpan from going deeper by identifying tree nodes. However, USpan may miss some HUSPs because \textit{SPU} sometimes filters promising candidates. The HuspExt \cite{alkan2015crom} and HUS-Span \cite{wang2016efficiently} algorithms are extensions of USpan devised to improve the mining efficiency. The HuspExt algorithm adopts a tight upper bound \textit{CRoM} to eliminate candidate items early. In HUS-Span, Wang \textit{et al.} \cite{wang2016efficiently} designed two tighter utility upper bounds, \textit{PEU} and \textit{RSU}, to remove unpromising patterns early and significantly reduce the search space. Recently, for further performance improvement, several algorithms have been proposed for HUSPM. For example, HUS-UT \cite{zhang2019efficient} adopts an efficient data structure called a utility table to facilitate the utility calculation; a parallel version called HUS-Par was also proposed. In addition, the novel data structures of utility-array and UL-list were proposed for ProUM \cite{gan2020proum} and HUSP-ULL \cite{gan2020fast}, respectively, to quickly discover HUSPs. 

Currently, there is an emergence of interesting extensions generalized from HUSPM. For example, Dinh \textit{et al.} \cite{dinh2017mining} designed a post-processing algorithm called PHUSPM, which can discover HUSPs periodically appearing in a quantitative sequential database. Adopting the special PBS and TSWU strategies, an efficient algorithm called MHUH  \cite{zhang2020efficient} was proposed for extracting high-utility hierarchical sequential patterns. In the domain of privacy preservation, Zhang \textit{et al.} \cite{zhang2019fast} developed a hiding HUSPs algorithm called FH-HUSP, which can protect personal private data based on dynamic programming and several efficient strategies. More recent HUSPM works are referenced in related surveys \cite{gan2018surveyuo,truong2019survey}.

\subsection{Top-$k$ Utility Pattern Mining}

Although aforementioned algorithms can efficiently find patterns, it is difficult for users to specify a proper minimum utility threshold. Top-$k$-based algorithms \cite{tzvetkov2005tsp,wang2005tfp,wang2016efficiently,wu2012mining} address this problem by providing users an opportunity to determine the desired number of patterns directly rather than considering the threshold.

Wu \textit{et al.} \cite{wu2012mining} first proposed an top-$k$ HUIM algorithm  for mining top-$k$ high-utility itemsets without setting the minimum utility threshold. They also incorporated several novel strategies for pruning the search space to achieve high efficiency. The study inspired a lot of works focusing on top-$k$ HUIM. Heungmo \textit{et al.} \cite{ryang2015top} developed an efficient algorithm REPT with highly decreased candidates. For further efficiency, Vincent \textit{et al.} \cite{tseng2015efficient} designed the first two-phase algorithm TKU with five strategies and the first one-phase algorithm TKO integrating the novel strategies. After that, kHMC \cite{duong2016efficient} relying on a novel co-occurrence pruning technique named EUCPT to avoid performing costly join operations for calculating the utilities of itemsets was designed. Moreover, several extension problem were proposed, such as discovering top-$k$ HUIs over data streams \cite{zihayat2014mining} and identifying top-$k$ on-shelf HUIs \cite{dam2017efficient}.

There are only a few top-$k$ HUSPM algorithms that dealing with sequence data.  TUS \cite{yin2013efficiently} is the first algorithm for top-$k$ HUSPM, and it discovers patterns without the minimum threshold by extending on their preliminary work of USpan. With the purpose of raising the minimum utility threshold quickly to reduce the search space as much as possible, TUS adopts three strategies in different stages of the mining process. As with the USpan algorithm \cite{yin2012uspan} utilizing the \textit{SPU} upper bound and corresponding pruning strategy, TUS is clearly also an incomplete algorithm, which means that it may miss some top-$k$ HUSPs under some circumstances. Besides, Wang \textit{et al.} \cite{wang2016efficiently} further developed the TKHUS-Span algorithm, which has three versions: \textit{BFS} strategy-based, \textit{DFS} strategy-based, and hybrid search strategy-based, based on the HUS-Span algorithm. TKHUS-Span adopting the \textit{BFS} strategy has better performance than other algorithms, including TUS, whereas that with the hybrid search strategy performs the best with limited memory space.

Applying top-$k$ HUSPM to a realistic scenario, Zihayat \textit{et al.} \cite{zihayat2016top} formulated a new problem as an extension of top-$k$ HUSPM: extracting the top-$k$ gene regulation-related patterns over time from a microarray dataset. They developed a novel utility model referring to a series of a priori professional knowledge on a specific disease from a biological investigation. They also designed a novel and problem-specific algorithm called TU-SEQ for mining the top-$k$ high-utility gene regulation sequential patterns with the ItemUtilList vertical data structure and PES strategy. The development of top-$k$ HUSPM is not yet mature. In particular, the only complete top-$k$ HUSPM method without missing any top-$k$ HUSPs, TKHUS-Span \cite{wang2016efficiently}, has much room for improvement, particularly in terms of efficiency. This motivates us to develop the more suitable data structure and more effective pruning strategies to address the problem of top-$k$ HUSPM.

\section{Preliminaries and Problem Formulation}

In this section, we briefly introduce the basic definitions and principles required for the remainder of the paper. We also adopt some definitions from prior research for clearer expression of the research issue. Finally, the problem definition of top-$k$ HUSPM is formalized.

\subsection{Notations and Concepts}

\begin{table}[!htbp]
	\centering
	\caption{Example quantitative sequence database}
	\label{table1}
	\begin{tabular}{|c|c|}  
		\hline 
		\textbf{SID} & \textbf{Quantitative sequence} \\
		\hline  
		\(S_{1}\) & $<$\{(\textit{a}:2) (\textit{c}:3)\}, \{(\textit{a}:3) (\textit{b}:1) (\textit{c}:2)\}, \{(\textit{e}:3)\}$>$ \\ 
		\hline
		\(S_{2}\) & $<$\{(\textit{a}:3) (\textit{d}:2)\}, \{(\textit{a}:1) (\textit{e}:3)\}, \{(\textit{b}:5) (\textit{c}:2) (\textit{d}:1) (\textit{e}:1)\}, \{(\textit{b}:1) (\textit{d}:5)\}$>$ \\  
		\hline  
		\(S_{3}\) & $<$\{(\textit{d}:2) (\textit{e}:2)\}, \{(\textit{a}:1) (\textit{b}:3)\}, \{(\textit{a}:2) (\textit{d}:4) (\textit{e}:1)\}, \{(\textit{f}:1)\}$>$ \\
		\hline  
		\(S_{4}\) & $<$\{(\textit{f}:2)\}, \{(\textit{a}:1) (\textit{d}:3)\} \{(\textit{d}:2)\} \{(\textit{a}:2)\}$>$ \\
		\hline
	\end{tabular}
\end{table}

\begin{table}[!htbp]
	\caption{Example utility table}
	\label{table2}
	\centering
	\begin{tabular}{|c|c|c|c|c|c|c|}
		\hline
		\textbf{Item}	    & \textit{a}	& \textit{b}	& \textit{c}	& \textit{d}	& \textit{e}	& \textit{f} \\ \hline 
		\textbf{External utility}	& \$5 & \$4	& \$2 & \$1 & \$3 & \$5 \\ \hline
	\end{tabular}
\end{table}

Let $I$ = \{$i_{1}$, $i_{2}$, $\cdots$, $i_{N}$\} be a set of possibly appearing and distinct items. An itemset \(X\) is a nonempty set containing one or more items of \(I\), that is, \(X \subseteq I\). The size of itemset \(X\) is represented by \(|X|\). A sequence $S$ = $<$$X_{1}$, $X_{2}$, $\cdots$, $X_{n}$$>$ is an ordered list of itemsets. Note that each element (i.e., itemset) can be unordered and satisfies \(X_{k} \subseteq I\), where \(1 \leq k \leq n\). Without loss of generality, all items in one itemset are sorted alphabetically. The length of \(S\) is \(l\) = \(\sum_{k = 1}^{n}|X_{k}|\), called an $l$-sequence, and the size of \(S\) is \(n\). We say that $T$: $<$$Y_{1}$, $Y_{2}$, ..., $Y_{m}$$>$ is the subsequence of \(S\), denoted as $T \subseteq S$, if there exists \(m\) integers \(1 \leq k_{1} < k_{2} < ... < k_{m} \leq n\) such that \(\forall 1 \leq v \leq m,Y_{v} \subseteq X_{k_{v}}\). For example, a sequence $s$ = $<$\{\textit{c}\}, \{\textit{a} \textit{b}\}$>$ is the subsequence of $<$\{\textit{a} \textit{c}\}, \{\textit{a} \textit{b} \textit{c} \textit{d}\}, \{\textit{b}\}$>$, and \(s\) is called a 3-sequence because its length and size are three and two, respectively. 

To illustrate the following concepts, we show examples of them in Tables \ref{table1} and \ref{table2}.

\begin{definition}[quantitative sequence]
	\rm A quantitative item ($q$-item) is a tuple ($i$:$q$) consisting of an item $i$ (\(i \in I\)) and a positive number $q$ representing the internal utility value (e.g., quantity) of $i$. A quantitative itemset ($q$-itemset) with $n$ $q$-items is denoted as \{($i_{1}$:$q_{1}$) ($i_{2}$:$q_{2}$)$\cdots$($i_{n}$:$q_{n})$\}, which can be regarded as an itemset with quantities. A quantitative sequence ($q$-sequence), denoted as $<$$ Y_{1}$, $Y_{2}$, ..., $Y_{m}$$>$, is an ordered list of $m$ $q$-itemsets, where $Y_{i} \subseteq I$.
\end{definition}

For example, $<$\{\textit{c}\}, \{\textit{a} \textit{b}\}$>$ is a sequence, while $<$\{(\textit{e}:6)\}, \{(\textit{f}:1) (\textit{c}:3)\}$>$ is a $q$-sequence, where each item is assigned a quantity.

\begin{definition}[quantitative sequence database]
	\rm A $q$-sequence database $D$ contains a set of pairs (\textit{SID}, \textit{QS}), where \textit{SID} is the unique identifier of a $q$-sequence \textit{QS}. Moreover, each kind of item in $D$ is associated with an external utility value (e.g., profit), all of which compose a utility table.
\end{definition}

Consider the example shown in Table \ref{table1}, where the $q$-sequence database has four $q$-sequences and six kinds of items, whose external utility values are provided in Table \ref{table2}. As a rule, the utility table is generated by decision makers according to prior knowledge of analogous users or contents.

\subsection{Utility Calculation}

In this subsection, we define a series of utility calculation functions. The $q$-item utility, denoted as \(q(i,\ j,\ s)\), which is the quantitative measure for ($i$:$q$) within the $j$th $q$-itemset of a $q$-sequence $s$, is defined as $u(i,\ j,\ s)$ = $q(i,\ j,\ s)$ $\times$ $eu(i)$, where \(q(i,\ j,\ s)\) and \(eu(i)\) are the corresponding internal utility in this occurrence and external utility of the item $i$, respectively. The $q$-itemset utility is equal to the sum of the utilities of the $q$-items it contains. Analogously, the utility of a $q$-sequence ($q$-sequence database) is the sum of the utilities of the $q$-itemsets ($q$-sequences) it contains.

For instance, consider the $q$-item $a$ within the 1st $q$-itemset of \(S_{1}\) in Table \ref{table1}; its utility can be calculated as $q(a,\ 1,\ S_{1})$ $\times$ $eu(a)$ = 2 $\times$ \$5 = \$10. Then, we have that the utility of the 1st $q$-itemset of \(S_{1}\) is \$16 (\$10 + \$6), and the utility of the $q$-sequence \(S_{1}\), described as \(u(S_{1})\), is \$48 (= \$16 + \$23 + \$9); the overall utility of the $q$-sequence database \(D\) in Table \ref{table1} is \(u(D)\) = \$48 + \$68 + \$42 + \$30 = \$188.

\begin{definition}[match]
	\rm Given an itemset $X$: $\{ i_{1},i_{2},\cdots,i_{m}\}$ and $q$-itemset $Y$: \{$j_{1}$:$q_{1}$) ($j_{2}$:$q_{2}$)$\cdots$($j_{n}$:$q_{n}$)\}, we say that $X$ matches $Y$ if and only if  $m$ = $n$ and $i_{k}$ = $j_{k}$ for \(1 \leq k \leq n\). Similarly, given a sequence $S$: $<$$X_{1}$, $X_{2}$, $\cdots$, $X_{m}$$>$ and $q$-sequence $Q$: $<$$Y_{1}$, $Y_{2}$, ..., $Y_{n}$$>$, $S$ matches $Q$, denoted as \(S\sim Q\) if and only if $m$ = $n$ and \(X_{k}\) matches \(Y_{k}\), where \(1 \leq k \leq n\).
\end{definition}

For example, \{\textit{a} \textit{c}\} matches the 1st $q$-itemset of \(S_{1}\), and $<$\{\textit{a} \textit{c}\}, \{\textit{a} \textit{b} \textit{c}\}, \{\textit{e}\}$>$ matches \(S_{1}\).

\begin{definition}[contain]
	\rm Given itemsets ($q$-itemsets) $X$ and $Y$, we say that $Y$ contains $X$ (i.e., $X$ is contained in $Y$), denoted as \(X \sqsubseteq Y\), if and only if $X$ is a subset of $Y$. Strictly speaking, the concept of "contain" slightly differs between an itemset and $q$-itemset.
\end{definition}

For example, \{\textit{a} \textit{b}\} is contained in \{\textit{a} \textit{b} \textit{c}\}, whereas \{(\textit{a}:1) (\textit{e}:2)\} is contained in \{(\textit{a}:1) (\textit{b}:3) (\textit{e}:2)\} and \{(\textit{a}:1) (\textit{e}:2) (\textit{f}:2)\} but not in \{(\textit{a}:3) (\textit{b}:2) (\textit{e}:2)\}. We formalize the calculation of itemset utility as $u(X,\ j,\ s)$ = $\sum_{\forall i \in X}^{}{u(i,\ j,\ s)}$, where $X$ is contained in the $j$th itemset of a $q$-sequence $s$. For example, in Table \ref{table1}, $u($\{\textit{a} \textit{c}\}, 2, $S_{1})$ = $u$(\textit{a}, 2, $S_{1}$) + $u$(\textit{c}, 2, $S_{1}$) = \$15 + \$4 = \$19.

\begin{definition}[instance]
	\rm Given a sequence $S$: $<$$X_{1},X_{2},\cdots,X_{m}$$>$ and $q$-sequence $Q$: $<$$Y_{1},Y_{2},\cdots,Y_{n}$$>$, where \(m \leq n\), we say that $Q$ has an instance $S$, denoted as \(S \sqsubseteq Q\) at position $p$: \(<k_{1},k_{2},...,k_{m}>\) if and only if there exists an integer sequence \(1 \leq k_{1} < k_{2} < ... < k_{m} \leq n\) such that \(X_{v} \sim X' \land X' \sqsubseteq Y_{k_{v}}\) \(\forall 1 \leq v \leq m\), where $X'$ is a $q$-sequence.
\end{definition}

For example, \(S_{2}\) has an instance $<$\{\textit{a}\}, \{\textit{e}\}$>$ at positions $p_{1}$: $<$1, 3$>$ and $p_{2}$: $<$2, 3$>$. In the remainder of this paper, for the sake of convenience, \(S \sqsubseteq Q\) is used to indicate \(S \sim S' \land S' \subseteq Q\).

\begin{definition}[sequence utility function]
	\rm In addition, the utility of a sequence $t$: $<$$X_{1}$, $X_{2}$, $\cdots$, $X_{m}$$>$ in a $q$-sequence $s$: $<$$Y_{1}$, $Y_{2}$, $\cdots$, $Y_{n}$$>$ at position $p$: $<$$k_{1}$, $k_{2}$, $\cdots$, $k_{m}$$>$ can be represented as \(u(t,p,s)\), where $u(t,p,s)$ = $\sum_{v = 1}^{m}u(X_{v},\ k_{v},\ s)$. In some cases, a sequence $t$ may appear in $s$ more than once; the set of all positions of $t$  is denoted as \(P(t,s)\). Let \(u(t,s)\) denote the utility of $t$ in $s$, and choose the maximum value as the utility, which is defined as $u(t,s)$ = $max\{u(t,\ p,\ s)|p \in P(t,s)\}$. The utility of $t$ in a $q$-sequence database $D$ can be defined as $u(t)$ = $\sum_{s \in D \land t \sqsubseteq s}^{}u(t,s)$. 
\end{definition}

For example, in Table \ref{table1}, \(S_{1}\) has an instance $<$\{\textit{a}\}, \{\textit{b} \textit{c}\}$>$ at position: $<$1, 2$>$; then, we can calculate $u$($<$\{\textit{a}\}, \{\textit{b} \textit{c}\}$>$, $<$1, 2$>$, $S_{1})$ = \$10 + \$8 = \$18. For example, in Table \ref{table1}, \(S_{2}\) has four instances of $t$: $<$\{\textit{a}\}, \{\textit{b}\}$>$, \(P(t,S_{2})\) = \{$<$1, 3$>$, $<$1, 4$>$, $<$2, 3$>$, $<$2, 4$>$\}; $u$($t$, $<$1, 3$>$, $S_{1}$) = \$15 + \$20 = \$35, \(u(t,S_{1})\) = max\{\$35, \$19, \$25, \$9\} = \$35, and \(u(t)\) = \$14 + \$35 + \$0 + \$0 = \$49. More details on how to calculate the utility value can be found in Ref. \cite{gan2018surveyuo}.

\begin{definition}[top-$k$ high-utility sequential pattern]
	\rm In a $q$-sequence database $D$, we define a sequence $s$ as a top-$k$ HUSP if there exist less than $k$ sequences whose utility values are higher than that of $s$.
\end{definition}

\textbf{Problem statement}. Given a $q$-sequence database $D$ and user-defined parameter $k$ in advance, the goal of top-$k$ HUSPM is to identify the complete set of top-$k$ HUSPs.

Conventional HUSPM aims to find the patterns whose utilities are no less than the minimum utility threshold \textit{minutil} in database $D$. Clearly, we can conclude that top-$k$ HUSPM, where the minimum utility threshold \textit{minutil} can be represented as \textit{minutil} = \textit{min}$\{u(t)|t \in T\}$, where $T$ is the complete set of top-$k$ HUSPs instead of being specified by the user, is an extension of conventional HUSPM.

\section{Proposed TKUS Algorithm}
Based on the aforementioned concepts, we developed an algorithm called TKUS for efficiently discovering top-$k$ HUSPs without specifying a minimum utility threshold. Without loss of generality, we formalize the following definitions and theorems under the context of a number of desired patterns $k$, a $q$-sequence database \(D\) including a series of $q$-sequences, and a utility table.

\subsection{Projection and Local Search Mechanism}

One of the main problems in the domain of data mining is that fundamental technologies repeatedly scan the database to discover knowledge. This issue also exists in HUSPM, which requires multiple database scans to calculate the utility of candidates. This can incur very high costs, especially for large database, even if some optimizations (e.g., Apriori property \cite{agrawal1994fast}) are performed to reduce the cost. To address this problem, we design a projection and local search mechanism, which only scans the original database once, and recursively constructs projected databases. This greatly limits the size of the scan space, especially when calculating the utility value of a long candidate, which contains so many items. To facilitate the presentation of algorithm, we present the following essential definitions.

\begin{definition}[extension]
	\rm The common operation of "extension", which includes $I$-Extension and $S$-Extension, is performed by many pattern mining algorithms. Given an $l$-sequence $t$, the $I$-Extension of sequence $t$ involves appending an item $i$ belonging to $I$ to the last itemset of $t$, which becomes a ($l$+1)-sequence \(t'\), also denoted as $<$$ t\bigoplus i $$>$. Similarly, the $S$-Extension of sequence $t$ involves appending the itemset only consisting of the item $i$ at the end of $t$, which also becomes a ($l$+1)-sequence \(t'\), represented as $<$$t\bigotimes i$$>$.
\end{definition}

As an extended version of the lexicographic sequence tree \cite{ayres2002sequential}, the LQS-Tree structure \cite{yin2012uspan,gan2020proum} is widely used in HUSPM to represent the search space. In this tree structure, the root is a null sequence, denoted as $<$$>$, whereas each of other nodes represents a candidate along with its utility. As its name implies, all children of a node are arranged in alphabetical order. The parent--child node relationship follows the regularity that children are either $I$-Extension or $S$-Extension sequences of their parent. 


As mentioned in Section 3.2, the utility of a sequence $t$ in a $q$-sequence $s$ is the maximum utility of all instances. However, the process of utility calculation may lead to long execution time for calculating multiple instances in one $q$-sequence. Take a sequence $t$: $<$\{\textit{a}\}, \{ \textit{b}\}$>$ and \(S_{2}\) in Table \ref{table1} as an example. First, we must find the positions of all instances of $t$ in \(S_{2}\): \(P(t,S_{2})\) = \{$<$1, 3$>$, $<$1, 4$>$, $<$2, 3$>$, $<$2, 4$>$\}. Then, we compute the utility of each instance and determine the maximum of the utilities of all instances, represented as \(u(t,S_{1})\) = max\{\$35, \$19, \$25, \$9\} = \$35. Thus, we adopt the utility chain data structure composed of utility lists to greatly reduce the calculation cost based on Ref. \cite{wang2016efficiently}. We discuss the details of utility chain in the following.

\begin{definition}[extension position]
	\rm Assume a sequence $t$ has an instance at position $<$$k_{1},k_{2},...,k_{m}$$>$ in a $q$-sequence $s$ with the extension position \(k_{m}\); the last item in $t$ is called the extension item. Note that different instances in one $q$-sequence can have the same extension position.
\end{definition}

Assume the set of extension positions of $t$ in $s$ is \{$p_{1},p_{2},\ldots,\ p_{n}$\}; then, the utility of $t$ at extension position \(p_{i}\) in the current $q$-sequence, denoted as $u(t,\ p_{i},\ s)$, is defined as $u(t,\ p_{i},\ s)$ = max\{$u$(t, $<$$ j_{1},j_{2},\ldots,p_{i}$$>$, s)$|j_{1} \leq j_{2} \leq \ldots \leq p_{i}$ and $<$$ j_{1},j_{2},\ldots,p_{i}$$> \in P(t,\ p_{i},\ s)$\}, where $P(t,\ p_{i},\ s)$ is the set of positions with extension position \(p_{i}\). 

According to the concept of extension position, we define the remaining utility \cite{yin2012uspan} of $t$ at extension position \(p_{i}\) in $s$ as $u_{rest}(t,\ p_{i},\ s)$ = $\sum_{i' \in s\bigwedge i' \succ t}^{}{u(i')}$, where \(i' \succ t\) indicates that the position of item \(i'\) is after the extension position $p_{i}$.

Consider the example in Table \ref{table1}, where $<$\{\textit{a}\}, \{\textit{b}\}$>$ has instances at positions $<$2, 4$>$ and $<$1, 4$>$ in \(S_{2}\); the extension positions of the two instances are both 4, and b is the extension item. Then, we can calculate $u$($<$\{\textit{a}\}, \{\textit{b}\}$>$, 4, $S_{2}$) = $\max$\{\$19, \$9\} = \$19 and $u_{rest}$($<$\{\textit{a}\}, \{\textit{b}\}$>$, 4, $S_{2}$) = \$4 + \$1 + \$3 + \$4 + \$5 = \$17.

\begin{definition}[pivot]
	\rm Assuming that the set of extension positions of a sequence $t$ in a $q$-sequence $s$ is \(\{ p_{1},p_{2},\ldots,\ p_{n}\}\), \(p_{1}\) is called the pivot of $t$ in $s$ \cite{yin2012uspan}.
\end{definition}

Based on the pivot concept, the remaining utility of $t$ in $s$ is denoted as \(u_{\text{rest}}(t,s)\) and defined as $u_{\text{rest}}(t,s)$ = $u_{rest}( t,\ p_{1},\ s)$. The pivot utility of $t$ in $s$, denoted as \(u_{p}(t,s)\), is defined as $u_{p}(t,s)$ = $u(t,\ p_{1},\ s)$.

For example, consider the sequence $<$\{\textit{a}\}, \{\textit{b}\}$>$ and \(S_{2}\) in Table \ref{table1}. We can calculate the remaining utility and  pivot utility as $u_{rest}$($<$\{\textit{a}\}, \{\textit{b}\}$>$, $S_{2}$) = $u_{rest}$($<$\{\textit{a}\}, \{\textit{b}\}$>$, 3, $S_{2}$) = \$17 and $u_{p}$($<$\{\textit{a}\}, \{\textit{b}\}$>$, $S_{2}$) = $u$($<$\{\textit{a}\}, \{\textit{b}\}$>$, 3, $S_{2}$) = \$35, where 3 is the pivot.

According to the concepts defined in this subsection, we designed the projected database to represent the necessary information of each sequence. As shown by the example in Figure \ref{prodatabase}, a projected database is composed of an information table and a utility chain consisting of a head table and multiple utility lists \cite{wang2016efficiently}. The information table includes two key pieces of information: the sequence itself and the prefix extension utility (\textit{PEU}) \cite{wang2016efficiently} value of the sequence in the projected database. The details are described in Section 4.3, but here we focus on the utility chain.

To facilitate the introduction of the projected database concept, we suppose a sequence $t$ has multiple instances at $m$ extension positions $PV(t,s)$ = $\{{pv}_{1},{pv}_{2},\cdots,{pv}_{m}\}$ in the $q$-sequence $s$, and we assume there are $n$ $q$-sequences, including $s$, containing $t$ in the database $D$. Each utility list is a set of elements with size of $m$, which correspond to $m$ extension positions of $t$ in $s$. The $i$th element contains the three following fields: 1) \textit{ItemsetID}, which is the $i$th extension position \({pv}_{i}\), 2) \textit{Utility}, which is the utility value of $t$ at the $i$th extension position \({pv}_{i}\), and 3) \textit{RestUtility}, which is the remaining utility of $t$ at the $i$th extension position \({pv}_{i}\). The head table stores two fields, \textit{SID} and \textit{PEU}, which record the identifier of $s$ and \textit{PEU} value of $t$ in $s$, respectively. There are $m$ rows in the head table, each of which corresponds to a utility list.

\begin{figure}[ht]
	\centering
	\includegraphics[clip,scale=0.56]{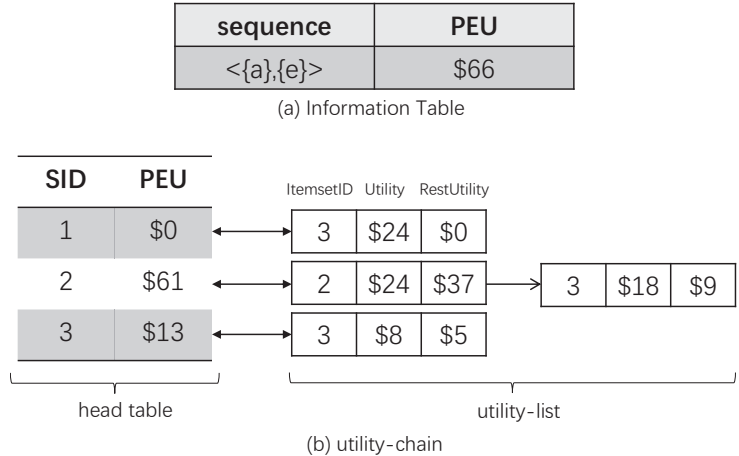}
	\caption{Example of the projected database of sequence \textless\{a\},\{e\}\textgreater{}}
	\label{prodatabase}
\end{figure}

The projected database of the sequence $<$\{\textit{a}\}, \{\textit{e}\}$>$ for the example in Table \ref{table1} is shown in Figure \ref{prodatabase}. The sequence $t$: $<$\{\textit{a}\}, \{\textit{e}\}$>$ is contained in \(S_{1}\), \(S_{2}\), and \(S_{3}\); hence, the head table has three rows that correspond to three utility lists. In \(S_{2}\), $t$ has multiple instances at extension positions 2 and 3, \(\ u_{p}(t,\ 2,\ S_{2})\) = \$24, \(u_{\text{rest}}(t,\ 2,\ S_{2})\) = \$37, \(u_{p}(t,\ 3,\ S_{2})\) = \$18, and \(u_{\text{rest}}(t,\ 3,\ S_{2})\) = \$9. Therefore, the second utility list is formed by two elements: (2, \$24, \$37) and (3, \$18, \$9).

Clearly, information included in the projected database of the sequence is precise and complete because the projected database of the $(l+1)$-sequence is built based on the parameter-free local search operation of $l$-sequences instead of a total scan of the whole original database. The TKUS algorithm adopts the projection and local search mechanism, which limits the scan space recursively by constructing a projected database of the current candidate and extending it. This divide-and-conquer concept greatly reduces the high costs of scanning, especially for a large database.

\subsection{Sequence Utility Raising Strategy}

As discussed in Section 1, to guarantee all top-$k$ patterns are found, TKUS employs a variable \textit{minutil}, which is increased from a very low value (close or equal to zero), as the threshold because the minimum utility threshold is not specified ahead of the mining process, which leads to more candidates being generated and the performance of the algorithm being degraded. To handle this problem, we propose the novel Sequence Utility Raising (SUR) strategy to raise the minimum utility threshold as fast as possible.

\begin{definition}[SUR strategy]
	\rm Given a $q$-sequence database $D$, let \textit{TKList} = \{$s_{1},s_{2},\ldots,s_{n}$\} be the set of all 1-, 2-, and $q$-sequences contained in $D$, where \(s_{i}\) (\(1 \leq i \leq n\)) is the sequence with the $i$th highest utility value in \textit{TKList}, denoted as \(u_{i}\). We can safely increase the variable \textit{minutil} to \(u_{k}\), where $k$ is the desired input number of patterns, before the mining task.
\end{definition}

Now, we prove that TKUS will not miss any top-$k$ HUSPs with the SUR strategy.

\begin{proof}
  \rm  Let \(H_{\textit{minutil}}\) be the set of HUSPs when the minimum utility threshold is set to \(\textit{minutil}\). It is obvious that \(H_{\textit{minutil}_{1}} \subseteq H_{\textit{minutil}_{2}}\), where \(\textit{minutil}_{1} > \textit{minutil}_{2}\). Suppose we increase the variable \textit{minutil} to \(u_{k}\) before the mining task; the set of sequences contained in \(H_{0}\) but not contained in \(H_{u_{k}}\), denoted as $R$, will not be discovered as unpromising candidates because their utility values are all lower than \(u_{k}\). For each sequence $s$ in $R$, there must be at least $k$ sequences (e.g., $k$ sequences from \(s_{1}\) to \(s_{k}\) in \textit{TKList}) whose utility values are not lower than the utility of $s$ in database $D$. Therefore, according to Definition 7, no sequences in $R$ are top-$k$ HUSPs and neglecting them does not affect the correctness of the results. Thus, TKUS will not miss any top-$k$ HUSPs with the SUR strategy.
\end{proof}

For example, in Table \ref{table1}, given $k$ = 4, the utility values of all 1-, 2-, and $q$-sequences can be calculated easily in one database scan: $u$($<$\{\textit{a}\}$>$) = \$50, $u$( $<$\{\textit{b}\}$>$) = \$36, $u$ $<$\{\textit{c}\}$>$) = \$10, $u$($<$\{\textit{c}\}, \{\textit{d}\}$>$) = \$9, $u$($<$\{\textit{a} \textit{d}\}$>$) = \$39, $u$($<$\{\textit{f}\}, \{\textit{a} \textit{d}\} \{\textit{d}\} \{\textit{a}\}$>$ = \$30, etc. As a result, we get the four sequences with the highest utility values $u$($<$\{\textit{a}\}, \{ \textit{a}\}$>$ = \$75, $u$($<$\{\textit{a} \textit{d}\}, \{\textit{a} \textit{e}\}, \{\textit{b} \textit{c} \textit{d} \textit{e}\}, \{\textit{b} \textit{d}\}$>$ = \$68, $u$($<$\{\textit{a}\}, \{\textit{e}\}$>$) = \$56, and $u$($<$\{\textit{a}\}$>$) = \$50, and \textit{minutil} increases to 50. It is clear that the SUR strategy generates the least unpromising candidates possible because it effectively increases the current minimum utility threshold to a rational level in advance.

\subsection{Upper Bounds and Corresponding Pruning Strategies}

Due to the monotonicity in the frequency-oriented framework \cite{gan2020fast}, it is computationally infeasible to reduce the search space by taking advantage of existing SPM upper bounds. In addition, compared with top-$k$ HUIM \cite{wu2012mining}, top-$k$ HUSPM intrinsically deals with a critical combinatorial explosion of the search space and computational complexity better. This is because inherent time order embedding in items leads to various possibilities of concatenation in the $q$-sequence database. To address this problem, several upper bounds have been proposed: sequence-weighted utilization (\textit{SWU}), sequence-projected utilization (\textit{SPU}), and sequence extension utility (\textit{SEU}); details are presented as follows. The basic principle of these upper bounds is to greatly reduce the search space by following the downward closure property, which is able to accelerate the mining process.

Suppose $t$ is a sequence and $D$ is a $q$-sequence database. Here, we briefly introduce the three upper bounds. \textit{SWU}, \textit{SPU}, and \textit{SEU} of $t$ in $D$ are defined as
\textit{SWU} = \(\sum_{t \subseteq s \land s \in D}^{}{u(s)}\) \cite{yin2012uspan}, \textit{SPU} = \(\sum_{t \subseteq s \land s \in D}^{}{(u_{p}\left( t,s \right) + u_{\text{rest}}(t,s)})\) \cite{yin2012uspan}, and  \textit{SEU} = \(\sum_{t \subseteq s \land s \in D}^{}{(u\left( t,s \right) +}u_{\text{rest}}(t,s))\) \cite{gan2020proum}, respectively. However, \textit{SPU} in not a true upper bound, which leads some real HUSPs being eliminated when pruning with it. The proof of this can be found in Ref. \cite{gan2020proum}. Therefore, the first top-$k$ HUSPM algorithm TUS \cite{yin2013efficiently}, which adopts the \textit{SPU} upper bound, may miss some top-$k$ HUSPs and return the wrong experimental results. More details of the downward closure property of an upper bound can also be found in Ref. \cite{gan2020proum}. \textit{SWU} and \textit{SEU} overestimate the utility of candidates and can limit the search to a reasonable scope. To further speed up the mining process, we adopt the tighter upper bounds of \textit{PEU} and \textit{RSU}, which were proposed in Ref. \cite{wang2016efficiently}.

\begin{definition}[\textit{PEU} upper bound \cite{wang2016efficiently}]
	\rm The \textit{PEU} of a sequence $t$ at position \(p: < k_{1},k_{2},...,k_{m} >\) in a $q$-sequence $s$, denoted as \(\textit{PEU}(t,p,s)\), is defined as 
	\[\textit{PEU}(t,\ p,\ s) = \left\{ \begin{matrix} u(t,\ p,\ s) + u_{\text{rest}}(t,\ k_{m},\ s),\ \ \ u_{\text{rest}}(t,\ k_{m},\ s) > 0 \\ 0\ \ \ \ \ \ \ \ \ \ \ \ \ \ \ \ \ \ \ \ \ \ \ \ ,\ \ \ otherwise \\
	\end{matrix} \right.\ \] 
	
	Based on \(\textit{PEU}(t,p,s)\), the \textit{PEU} of $t$ in the $q$-sequence is defined as \(\textit{PEU}(t,s)\) = \(\max\{\textit{PEU}\left( t,\ p_{i},\ s \right)\}\), where \(p_{i}\) is one position of $t$ in $s$ \(P(t,s)\). Moreover, the overall \textit{PEU} value of the sequence $t$ in a $q$-sequence database $D$ is the sum of the \textit{PEU}s of $t$ in each $q$-sequence, which is denoted as \(\textit{PEU}(t)\) and defined as \(\textit{PEU}(t)\) = \(\sum_{t \sqsubseteq s \land s \in D}^{}{\textit{PEU}(t,s)}\).
\end{definition}

For example, consider the sequence $t$: $<$\{\textit{a}\}, \{\textit{e}\}$>$ in Figure \ref{prodatabase}, where $t$ has two instances in \(S_{1}\). \textit{PEU}($t$, $<$1, 3$>$, $S_{1})$ and \textit{PEU}($t$, $<$2, 3$>$, $S_{1})$ are both equal to \$0 because \(u_{\text{rest}}( t,\ 3,\ S_{1})\) is \$0. In addition, $t$ has three instances in \(S_{2}\), and we have \(\textit{PEU}\left( t,S_{2} \right)\) = $\max$\{\$61, \$27, \$17\} = \$61. Finally, the \textit{PEU} of any sequence extended by $t$ can be calculated as follows: \(\textit{PEU}(t)\) = \(\textit{PEU}(t,S_{1}) + \textit{PEU}\left( t,S_{2} \right) + \textit{PEU}(t,S_{3})\) = \$0 + \$61 + \$13 = \$74.

\begin{theorem}
	Given two sequences $t$ and \(t'\) and a $q$-sequence database $D$, suppose \(t'\) is a prefix of $t$. We have
	\(u(t) \leq \textit{PEU}(t')\) \cite{wang2016efficiently}.
\end{theorem}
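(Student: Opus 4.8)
The plan is to establish the inequality $u(t) \leq \textit{PEU}(t')$ by unwinding the definitions of both sides and matching contributions sequence by sequence through the database. First I would observe that both $u(t)$ and $\textit{PEU}(t')$ are sums over $q$-sequences $s \in D$, so it suffices to prove the per-sequence bound $u(t,s) \leq \textit{PEU}(t',s)$ for every $s$ that contains $t$ (noting that if $t \sqsubseteq s$ then necessarily $t' \sqsubseteq s$ since $t'$ is a prefix of $t$, so every term on the left has a counterpart on the right, and the remaining terms of $\textit{PEU}(t')$ are nonnegative).

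Next I would fix such an $s$ and let $p = \langle k_1, k_2, \ldots, k_m \rangle$ be a position of $t$ in $s$ achieving the maximum, so that $u(t,s) = u(t,p,s)$. Since $t'$ is a prefix of $t$, say $t'$ has length $m'$ with $m' \leq m$, the truncation $p' = \langle k_1, \ldots, k_{m'} \rangle$ is a valid position of $t'$ in $s$. The key decomposition is $u(t,p,s) = u(t',p',s) + u(t \setminus t', \text{tail of } p, s)$, i.e. the utility of $t$ splits into the utility contributed by its prefix $t'$ and the utility contributed by the items of $t$ lying strictly after the extension position $k_{m'}$ of $t'$. That latter quantity is bounded above by $u_{\text{rest}}(t', k_{m'}, s)$, because $u_{\text{rest}}(t', k_{m'}, s)$ is by definition the sum of utilities of \emph{all} items occurring after position $k_{m'}$ in $s$, which certainly includes the items of $t$ placed there. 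Combining, $u(t,p,s) \leq u(t',p',s) + u_{\text{rest}}(t',k_{m'},s)$, and since $u_{\text{rest}}(t',k_{m'},s) > 0$ here (it contains at least the items of $t$ beyond $t'$, assuming $t \neq t'$), the right-hand side is exactly $\textit{PEU}(t',p',s)$, which is at most $\max_{q} \textit{PEU}(t',q,s) = \textit{PEU}(t',s)$. The edge case $t = t'$ is handled separately and trivially, since then $u(t,s) = u(t',s) \leq \textit{PEU}(t',s)$ follows from $u(t',p',s) \leq u(t',p',s) + u_{\text{rest}}(\cdot) = \textit{PEU}$ or from $u_{\text{rest}} = 0$ requiring a direct check that $u(t',s)$ is still counted.

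Summing the per-sequence inequality over all $s \in D$ with $t \sqsubseteq s$ and then noting that $\textit{PEU}(t') = \sum_{t' \sqsubseteq s} \textit{PEU}(t',s) \geq \sum_{t \sqsubseteq s} \textit{PEU}(t',s)$ (again by nonnegativity of the dropped terms) completes the argument. The main obstacle I anticipate is the careful bookkeeping in the decomposition step: one must verify that the position $p'$ obtained by truncating an optimal position $p$ for $t$ is not only legal for $t'$ but that the ``tail'' items of $t$ genuinely occur strictly after the extension position $k_{m'}$ of $t'$ in $s$ — this is where the precise definitions of \emph{instance}, \emph{extension position}, and \emph{remaining utility} must be invoked in concert, and where an off-by-one in the index $k_{m'}$ versus $k_{m'} + 1$ could creep in. The $u_{\text{rest}} = 0$ boundary case in the definition of $\textit{PEU}$ also deserves an explicit sentence to confirm it cannot break the bound.
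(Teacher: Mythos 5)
Your argument is correct and follows essentially the same route as the paper: split the utility of $t$ in each containing $q$-sequence into the prefix $t'$'s utility at its extension position plus the tail, bound the tail by $u_{\text{rest}}(t',\cdot,s)$, take the maximum to reach $\textit{PEU}(t',s)$, and sum over the database using $\{s : t \sqsubseteq s\} \subseteq \{s : t' \sqsubseteq s\}$ together with nonnegativity. One small remark: the $t=t'$ aside is unnecessary (the paper reads ``prefix'' as proper, i.e.\ $t = t' \bullet t''$ with $|t''|>0$), and your fallback reasoning there is shaky since $\textit{PEU}(t',s)$ is set to $0$ when the remaining utility vanishes and hence need not bound $u(t',s)$ itself --- covering $t'=t$ is exactly the job of the $\textit{RSU}$ bound in Theorem 2.
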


\begin{proof}
   \rm	Because \(t'\) is a prefix of $t$, let \(t\) = \(t' \bullet t''\), where \({|t}''| > 0\). Assume $t$ is contained in a $q$-sequence $s$. The utility of $t$ in $s$ can be divided into two parts denoted as \(u(t,s)\) = \(u( t',\ p,\ s) + u_{\textit{limit}}(t'')\), where $p$ is one of the extension positions of \(t'\) in $s$, and \(u_{\textit{limit}}(t'')\) is the utility of an instance of \(t''\) in $s$ with the first item after position $p$. Clearly, $u_{\textit{limit}}( t'') \leq u_{\text{rest}}(t',\ p,\ s)$; then, we have
	\begin{align*}
	u(t,s) &= u(t',\ p,\ s) + u_{\textit{limit}}(t'')  \\
	&\leq u(t',\ p,\ s) + u_{\text{rest}}(t',\ p,\ s)  \\
	&\leq max\{u(t',\ p,\ s) + u_{\text{rest}}(t',\ p,\ s)\}  \\
	&\leq \textit{PEU}(t',s)
	\end{align*}
	Because \(t' \sqsubseteq t\), we have
	\(u(t)\) = \(\sum_{s \in D \land t \sqsubseteq s}^{}u(t,s)\) $\leq$ \(\sum_{s \in D \land t \sqsubseteq s}^{}{}\textit{PEU}\left( t',s \right)\) $\leq$ \(\sum_{s \in D \land t' \sqsubseteq s}^{}{}\textit{PEU}( t',s)\) = \(\textit{PEU}(t')\)
	in a database $D$.
\end{proof}

Based on the aforementioned proof, we can draw the conclusion that the utilities of all sequences that can be extended from $t$ are lower than a given value if the \textit{PEU} of $t$ is lower than that value. Thus, we have the following TDE pruning strategy.

\begin{definition}[TDE pruning strategy]
	\rm In the TKUS algorithm, let $t$ and \textit{minutil} be a candidate and the current minimum utility threshold, respectively. Assume that the node $N$ in the LQS-tree denotes the candidate $t$. If \(\textit{PEU}(t) \leq \textit{minutil}\), then TKUS need not check its descendants; in other words, TKUS can terminate the present iteration from node $N$.
\end{definition} 

Clearly, the TDE pruning strategy is a depth-based strategy and can prevent scanning deep but unpromising nodes in the LQS-tree. Significantly, it is a safe strategy because the \textit{PEU} upper bound has the downward closure property; hence, any descendant of the node denoting $t$ cannot be a desired top-$k$ HUSP. Consider the sequence $t$: $<$\{\textit{a}\}, \{\textit{e}\}$>$ for the example in Table \ref{table1}. We have calculated the \textit{PEU} value of $t$ in Figure \ref{prodatabase}, where \(\textit{PEU}(t)\) = \$74. Suppose the minimum utility threshold has already been increased to 80 (i.e., \textit{minutil} = 80); then, the algorithm terminates at the node representing $t$ and stops searching its descendants according to the TDE pruning strategy.

\begin{definition}[\textit{RSU} upper bound \cite{wang2016efficiently}]
	\rm Suppose \(\alpha\) is a sequence that can be extended to the sequence $t$ through one $I$-Extension or $S$-Extension operation. In other words, the node representing \(\alpha\) is the parent of the node representing $t$ in the LQS-tree. The \textit{RSU} of $t$ in a $q$-sequence $s$, denoted as \(RSU(t,s)\), is defined as
	
	\[\textit{RSU}(t,s) = \left\{ \begin{matrix} \textit{PEU}(\alpha,s),\ \ & t \sqsubseteq s \land \alpha \sqsubseteq s \\ 0\ \ \ \ \ \ \ ,\ \ & otherwise \\
	\end{matrix} \right. \ \]
	
	Then, the \textit{RSU} of $t$ in database $D$ can be denoted as \(\textit{RSU}(t)\) and defined as \(\textit{RSU}( t)\) = \(\sum_{t \sqsubseteq s \land s \sqsubseteq D}^{}{\textit{RSU}(t,s)}\).
\end{definition}

Consider the example in Table \ref{table1}. Suppose $t$: $<$\{\textit{a}\}, \{\textit{e}\}, \{\textit{f}\}$>$ and $\alpha$: $<$\{\textit{a}\}, \{\textit{e}\}$>$. It is clear that only \(S_{2}\) contains both $t$ and \(t'\). According to Definition 18, we have \textit{RSU}$(t, S_{1})$ = \textit{RSU}$(t, S_{3})$ = \$0, because \(S_{1}\) and \(S_{3}\) contain \(\alpha\) but do not contain $t$, and \textit{RSU}$(t, S_{2})$ = \textit{PEU}$(\alpha, S_{2})$ = \$61. Finally, the \textit{RSU} of $t$ is \textit{RSU}$(t)$ = \$61.

\begin{theorem}
	Given two sequences $t$ and \(t'\) and a $q$-sequence database $D$, assume \(t'\) is a prefix of $t$ or \(t' = t\). We have \(u(t) \leq \textit{RSU}(t')\) \cite{wang2016efficiently}.
\end{theorem}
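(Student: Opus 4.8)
The plan is to reduce Theorem 2 to Theorem 1, which has already been established. The key observation is that the \textit{RSU} upper bound of a sequence $t$ is defined in terms of the \textit{PEU} value of its parent $\alpha$ in the LQS-tree, and Theorem 1 already tells us that $u(t) \leq \textit{PEU}(\alpha)$ whenever $\alpha$ is a prefix of $t$. So the heart of the argument is a chaining step: if $t'$ is a prefix of $t$ (or $t' = t$), and $\alpha$ is the parent of $t'$ in the LQS-tree, then $\alpha$ is also a prefix of $t$, and we can connect the two bounds through $\alpha$.

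First I would set up the two cases. If $t' = t$, let $\alpha$ be the parent node of $t$ in the LQS-tree, so that $t$ is obtained from $\alpha$ by one $I$- or $S$-Extension; then $\alpha$ is a prefix of $t$. If $t'$ is a strict prefix of $t$, again let $\alpha$ be the parent of $t'$; then $\alpha$ is a prefix of $t'$, hence a prefix of $t$ by transitivity of the prefix relation. In either case we have a sequence $\alpha$ that is a prefix of $t$, and by Theorem 1 we get $u(t) \leq \textit{PEU}(\alpha)$.

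Next I would unfold the definition of $\textit{RSU}(t')$ and compare it termwise with $\textit{PEU}(\alpha)$. By Definition 18, $\textit{RSU}(t') = \sum_{t' \sqsubseteq s \land s \in D} \textit{PEU}(\alpha, s)$, whereas $\textit{PEU}(\alpha) = \sum_{\alpha \sqsubseteq s \land s \in D} \textit{PEU}(\alpha, s)$. Since $\alpha$ is a prefix of $t'$, every $q$-sequence $s$ containing $t'$ also contains $\alpha$, so the index set of the first sum is a subset of that of the second; because every summand $\textit{PEU}(\alpha, s)$ is nonnegative (it is either a sum of utility and remaining-utility terms, which are positive quantities, or $0$ by its piecewise definition), dropping terms only decreases the sum. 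Hence $\textit{RSU}(t') \leq \textit{PEU}(\alpha)$. Combining with the bound from the previous step gives $u(t) \leq \textit{PEU}(\alpha) \leq \textit{RSU}(t')$... wait, the inequality direction needs care here: we actually want $u(t) \le \textit{RSU}(t')$, and we have $u(t) \le \textit{PEU}(\alpha)$ and $\textit{RSU}(t') \le \textit{PEU}(\alpha)$, which does \emph{not} immediately chain. So the real work is to show $u(t) \leq \textit{RSU}(t')$ directly, i.e.\ $u(t) \le \sum_{t' \sqsubseteq s} \textit{PEU}(\alpha,s)$, which follows because for each $s$ with $t \sqsubseteq s$ we have $t' \sqsubseteq s$, and the per-sequence inequality $u(t,s) \le \textit{PEU}(\alpha,s)$ holds by the same instance-decomposition argument used in the proof of Theorem 1 (split an instance of $t$ at the extension position of $\alpha$, bound the tail by the remaining utility).

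The main obstacle, then, is getting the per-sequence inequality $u(t,s) \leq \textit{PEU}(\alpha, s)$ right and making sure the summation ranges line up: the summands on the \textit{RSU} side are indexed by sequences containing $t'$, but we only control those containing $t$, so I must argue that the extra nonnegative terms (sequences containing $t'$ but not $t$) only help, and that sequences containing $t$ are a subset of those containing $t'$ which are a subset of those containing $\alpha$. Once the bookkeeping of these three nested index sets and the nonnegativity of $\textit{PEU}(\alpha,s)$ is handled cleanly, the rest is a direct reapplication of the instance-splitting estimate from Theorem 1.
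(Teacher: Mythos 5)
Your final argument is correct and is essentially the paper's own proof: take $\alpha$ to be the one-extension parent of $t'$, observe $\alpha$ is a prefix of $t$, apply the per-sequence bound $u(t,s)\leq \textit{PEU}(\alpha,s)=\textit{RSU}(t',s)$ from the proof of Theorem 1, and sum over the nested index sets using nonnegativity of \textit{PEU}. The detour attempting to chain through the global quantity $\textit{PEU}(\alpha)$ was rightly abandoned (that inequality points the wrong way), and your explicit handling of the index-set containment is just a more careful writing of the summation step the paper states tersely.
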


\begin{proof}
   \rm	Suppose \(\alpha\) is a sequence that can be extended to \(t'\) through one $I$-Extension or $S$-Extension operation. \(\alpha\) is also a prefix of $t$ because \(t'\) is a prefix of $t$ or \(t'\) = \(t\). Given a $q$-sequence $s$ containing $t$, we obtain \(u(t, s) \leq \textit{PEU}(\alpha, s)\) according to the proof of Theorem 1. Based on Definition 13, we have \(\textit{RSU}(t', s) = \textit{PEU}(\alpha,s)\). Finally, we have \(u(t, s) \leq \textit{RSU}(t', s)\), so we can conclude that \(u(t) \leq \textit{RSU}(t')\) in database $D$.
\end{proof}

Different from the \textit{PEU} upper bound, \textit{RSU} is an overestimation of both its descendant and itself. Now, we have the following EUI pruning strategy.

\begin{definition}[EUI pruning strategy]
	\rm In the TKUS procedure, let $t$ and \textit{minutil} be a candidate and the current minimum utility threshold, respectively. Assume the set \(I\): \(\{ i_{1}, i_{2}, \ldots, i_{n}\}\) consists of all items that can be extended to $t$ to form \(t'\). Suppose $i_{m}$ \((1 \leq m \leq n)\) is extended to $t$ forming \(t'\); if \(\textit{RSU}( t') \leq \textit{minutil}\), then TKUS can be stopped from exploring node $N$. 
\end{definition}

It should be noted that EUI is a width-based pruning strategy, which is able to stop TKUS from scanning a new branch in the LQS-tree. The EUI strategy is also a safe strategy, ensuring that all top-$k$ HUSPs are obtained. This is because the \textit{RSU} upper bound is monotonous, so any descendants of $t$ and itself cannot be contained in the set of top-$k$ HUSPs. Consider the sequence $t$: $<$\{\textit{a}\}, \{\textit{e}\}$>$ and one of its $S$-Extension items \textit{f} in Table \ref{table1} as an instance. We calculated the \textit{RSU} value of $t'$ = $<$$t \otimes f$$>$ = $<$\{\textit{a}\}, \{\textit{e}\}, \{\textit{f}\}$>$ in Figure \ref{prodatabase}, where \(\textit{RSU}(t')\) = \$61. Suppose the minimum utility threshold has been increased to \$65 (i.e., \(\textit{minutil}\) = \$65); the algorithm prunes the node representing \(t'\) without calculating its utility value, as well as its descendants according to the EUI pruning strategy.

\subsection{Proposed TKUS algorithm}

Based on the data structure and some aforementioned strategies, the proposed TKUS algorithm is described as follows. The algorithmic form of the TKUS algorithm is given in Algorithm \ref{alg:The TKUS Algorithm} and Algorithm \ref{alg:Projection-TKUS}, which represent the main and recursive procedures, respectively.

\begin{algorithm}[ht]
	\caption{The TKUS Algorithm}
	\label{alg:The TKUS Algorithm}
	\begin{algorithmic}[1]
		\REQUIRE 
		$D$: a quantitive database; 
		$k$: the number of desired HUSPs; 
		$utable$: a table containing the external utility of each item;
		\ENSURE 
		Top-$k$ HUSPs in $D$;
		
		\STATE initialize \textit{minutil} = 0;
		\STATE first scan the original database $D$ to: 
		1). calculate the utility values of all 1-squences and 2-sequences; 
		2). store the utility values of all $q$-sequences;
		3). construct projected databases of all 1-sequences; \qquad(\textbf{The projection and local search mechanism})
		
		\STATE sort the utility values of 1-sequences, 2-sequences and $q$-sequences in descending order, and get the $k$-th highest utility value $\textit{minutil}_{0}$; 
		\STATE set \textit{minutil} = $\textit{minutil}_{0}$; \qquad(\textbf{The SUR strategy}) 
		\STATE initialize \textit{TKList} = $\emptyset$; 
		\FOR {$t \in $ 1-sequences }
			\IF{$t.\textit{utility} \ge \textit{minutil}$}
			\STATE update \textit{TKList};
			\ENDIF
		\ENDFOR
		\FOR {$t \in $ 1-sequences }
			\IF{$\textit{PEU}(t) \ge \textit{minutil}$}
			\STATE call \textit{Projection-TKUS(t,$\text{DB}_s$)}; \qquad(\textbf{The TDE strategy})
			\ENDIF
		\ENDFOR
		
		\STATE \textbf{return} \textit{TKList};
		
	\end{algorithmic}	
\end{algorithm}

\begin{algorithm}[ht]
	\caption{Projection-TKUS}
	\label{alg:Projection-TKUS}
	\begin{algorithmic}[1]
		\REQUIRE 
		$t$: a sequence as prefix; 
		$\textit{DB}_{t}$: the projected database of $t$; 
		\STATE scan $\textit{DB}_{t}$ once to:
		1). add $I$-Extension items of $t$ to \textit{ilist}; 2).add $S$-Extension items of $t$ to \textit{slist};
		
		\FOR {each item $i \in $ \textit{ilist} }
			\STATE $t'$ $\leftarrow$ $<$$t\bigoplus i$$>$; 
			\IF{$\textit{RSU}(t')<\textit{minutil}$}
				\STATE remove $i$ from \textit{ilist}; \qquad(\textbf{The EUI strategy})
			\ENDIF
			\STATE construct projection database of $t'$ $\textit{DB}_{t'}$; \qquad(\textbf{The projection and local search mechanism})
			\IF{$t'.utility\ge \textit{minutil}$}
				\STATE update \textit{TKList};
				\STATE update \textit{minutil};
			\ENDIF
			\STATE put $t'$ to \textit{seqlist};
		\ENDFOR
		\FOR {each item $i \in $ \textit{slist} }
			\STATE $t'$ $\leftarrow$ $<$$t\bigotimes i$$>$;
			\IF{$RSU(t')<\textit{minutil}$}
				\STATE remove $i$ from \textit{slist} \qquad(\textbf{The EUI strategy})
			\ENDIF
			\STATE construct projection database of $t'$ $\textit{DB}_{t'}$; \qquad(\textbf{The projection and local search mechanism})
			\IF{$t'.utility\ge \textit{minutil}$}
				\STATE update \textit{TKList};
				\STATE update \textit{minutil};
			\ENDIF
			\STATE put $t'$ to \textit{seqlist};
		\ENDFOR
		\STATE sort all sequences in \textit{seqlist} according to \textit{PEU} in descending order;
		\FOR {each sequence $s \in $ \textit{seqlist} }
			\IF{$PEU(t)\ge \textit{minutil}$}
				\STATE call \textit{Projection-TKUS(s,s.DB)}; \qquad(\textbf{The TDE strategy})
			\ENDIF
		\ENDFOR
	\end{algorithmic}	
\end{algorithm}

In the beginning of the mining process, the TKUS algorithm adopts the variable \textit{minutil} to store the current minimum utility threshold (Line 1). It first scans the $q$-sequence database $D$ to calculate the utility values of all 1-sequences (i.e., items) and 2-sequences, store the utility values of all $q$-sequences, and construct the projected databases of all 1-sequences (Line 2). Then, following the SUR strategy, it increases \textit{minutil} to the $k$th highest utility value from the utility values obtained before (Lines 3--4). In addition, it engages a sorted list called \textit{TKList} with a fixed size of $k$ to maintain the top-$k$ HUSPs on-the-fly (Line 5). In addition, for each 1-sequence, if its utility value is greater than \textit{minutil}, the \textit{TKList} data structure is updated (Lines 6--8). After that, adopting the TDE strategy, TKUS recursively builds projected databases and searches locally by traversing the LQS-tree with the depth-first search strategy (Lines 9--11).

As shown in Algorithm \ref{alg:Projection-TKUS}, the \textit{Projection-TKUS} procedure follows a width-first search to enumerate sequences, which can be generated by the prefix according to the projection and local search mechanism. The algorithm regards each input sequence $t$ as a prefix and scans the projected database to find items that can be extended (Line 1). It is worth noting that the \textit{RSU} value of each extension sequence \(t'\) is calculated simultaneously during scanning. Items in \textit{ilist} will be processed in the following way (Lines 2--10). For the extended sequence \(t'\), TKUS checks whether it could be a top-$k$ HUSP or a prefix of a top-$k$ HUSP using the EUI strategy (Lines 4--5). Then, TKUS builds the projected database of \(t'\) to reduce the search space (Line 6). Meanwhile, the utility value of \(t'\) is calculated when searching the space locally. If the utility value of \(t'\) is greater than \textit{minutil}, TKUS updates the \textit{TKList} and \textit{minutil} because \(t'\) may be a top-$k$ HUSP (Lines 7--10). Similarly, TKUS addresses each item in \textit{slist}  analogous to the aforementioned procedure (Lines 11--19). After the extension operation, TKUS sorts \textit{seqlist}, the list of all generated sequences whose \textit{RSU} values are larger than \textit{minutil}, in descending order according to \textit{PEU} (Line 20). This is because the utility values of a sequence's descendants are intuitively more likely to be higher than those of lower-utility ones. Finally, if the \textit{PEU} of the sequence in \textit{seqlist} is greater than the minimum utility threshold \textit{minutil}, then the \textit{Projection-TKUS} is applied recursively to discover the top-$k$ HUSPs by considering the descendants of the sequence (Lines 21--23).

\section{Experiments} 
\label{sec:experiments}

To assess the performance of the proposed TKUS algorithm, we conducted experiments implemented in Java and developed in IntelliJ IDEA2019. All experiments were performed on a personal computer equipped with a 3.8 GHz Intel Core i7-10700K CPU, 32 GB RAM, and 64-bit Windows 10 operating system. Specifically, the TKUS algorithm was evaluated against the state-of-the-art TKHUS-Span algorithm \cite{wang2016efficiently}. The only other existing top-$k$ HUSPM algorithm, TUS \cite{yin2013efficiently}, was not evaluated because it is incapable of ensuring the discovery of all top-$k$ HUSPs, as discussed in Section 4.3. Details of the experiments are given in the following.

\subsection{Datasets and Data Preprocessing}

To evaluate the performance of the algorithms, we first verified TKUS on six datasets, including five real-world datasets. We used two publicly available real-world e-commerce datasets, which are generated from a series of purchase records, in our experiments. In addition, we utilized three linguistic datasets, which can be easily transformed by an article or a book. These datasets represent the main categories of data with varied features that are typically encountered in real-world scenarios.

Moreover, we used a synthetic dataset called Syn80K, where each $q$-itemset contains more $q$-items on average. The features of the used datasets are listed in Table \ref{features1} and Table \ref{feature2}. Note that the Yoochoose dataset is available at the RecSys website\footnote{\url{https://recsys.acm.org/recsys15/challenge/}}, whereas the other four datasets can be obtained from the open-source data mining library\footnote{\url{http://www.philippe-fournier-viger.com/spmf/}}. More details about datasets we used can be found in Ref. \cite{gan2020proum}. 

\begin{table*}[ht]
	\caption{Properties of datasets}
	\label{features1}
	\centering
		\begin{tabular}{|c|c|}
			\hline
			\textbf{Feature} & \textbf{Description} \\ \hline \hline      
			$|D|$ & number of $q$-sequences  \\ \hline
			$|I|$ & number of different items  \\ \hline 
			AVG($S$) & average length of $q$-sequence \\ \hline
			MAX($S$) & maximum length of $q$-sequence  \\ \hline
			AVG($Sequence$)	& average number of $q$-itemsets per $q$-sequence  \\ \hline
			AVG($Itemset$)	&  average number of $q$-items per $q$-itemset  \\ \hline
		\end{tabular}		
\end{table*}

\begin{table*}[ht]
	\caption{Details of features of datasets}
		\label{feature2}      
		\centering
		\begin{tabular}{|c|c|c|c|c|c|c|}
			\hline
			\textbf{Dataset} & \textbf{$|D|$} & \textbf{$|I|$} & \textbf{AVG(\textit{S})} & \textbf{MAX(\textit{S})} & \textbf{AVG(\textit{Sequence})} & \textbf{AVG(\textit{Itemset})} \\ \hline   \hline    
			Bible & 36369 & 13905	& 21.64 & 100 & 17.85 & 1.0 \\ \hline
			Leviathan & 5834 & 9025 & 33.81 & 100 & 26.34 & 1.0 \\ \hline
			Sign & 730 & 267	& 52 & 94 & 51.99 & 1.0 \\ \hline
			Yoochoose & 234300 & 16004 & 1.13 & 21 & 2.11 & 1.97 \\ \hline
			Kosarak	& 10000 & 10094	& 8.14 & 608 & 8.14 & 1.0 \\ \hline
			Syn80K	& 79718 & 7584	& 6.19 & 18 & 26.69 & 4.32 \\ \hline
		\end{tabular}			
\end{table*}

$ \bullet $ \textit{\textbf{Bible}} is a conversion of the Bible into a dataset, where each word can be seen as an item, while each sentence is regarded as a sequence.

$ \bullet $ \textit{\textbf{Leviathan}} is a conversion of the novel Leviathan by Thomas Hobbes. The generation method is similar to that of the previous dataset, where we adopt a digital item to replace a special word.

$ \bullet $ \textit{\textbf{Sign}} is a sign language utterance dataset created by the National Center for Sign Language and Gesture Resources at Boston University. Each utterance has a connection with a segment of video.

$ \bullet $ \textit{\textbf{Yoochoose}} is a very large dataset from an e-commerce site. Each click-stream of e-commodities is transformed to a transaction, which sometimes ends in a purchase event. Each record has five fields: Session ID, Timestamp, Item ID, Price, and Quantity. 

$ \bullet $ \textit{\textbf{Kosarak}} is also a click-stream dataset from a Hungarian online news portal. Some records in this dataset have extremely high lengths, which increases the mining difficulty.

\subsection{Effectiveness Analysis}

\begin{table*}[!htbp]
	\centering
	\caption{Effectiveness of different strategies utilizing by TKUS}
	\label{table_effect}
	\begin{tabular}{|c|c|p{2cm}<{\centering}|p{2cm}<{\centering}|p{2cm}<{\centering}|p{2cm}<{\centering}|}
		\hline \textbf{Dataset} &  \textbf{Result} & $\mathbf{TKUS}$ & $\mathbf{TKUS_{SUR}}$ & $\mathbf{TKUS_{TDE}}$ & $\mathbf{TKUS_{EUI}}$ \\
		\hline \hline
		\multirow{3}{*}{\makecell[c]{Bible \\ $k$ = 200}} 
		& {Time} & {888} & {936} & {/} & {2,028} \\
		\cline{2-2}
		& {Memory} & {4,247} & {3,634} & {/} & {2,748} \\
		\cline{2-2}
		& {Candidate} & {17,059} & {18,540} & {/} &{19,462} \\
		\hline
		
		\multirow{3}{*}{\makecell[c]{Leviathan \\ $k$ = 500}} 
		& {Time} & {339} & {346} & {1,090} & {542} \\
		\cline{2-2}
		& {Memory} & {2,893} & {3,089} & {2,963} & {3,071} \\
		\cline{2-2}
		& {Candidate} & {40,918} & {42,048} & {37,348,596} &{40,918} \\
		\hline
		
		\multirow{3}{*}{\makecell[c]{Sign \\ $k$ = 500}}
		& {Time} & {193} & {190} & {3,147} & {231} \\
		\cline{2-2}
		& {Memory} & {2,682} & {3,669} & {2,013} & {2,846} \\
		\cline{2-2}
		& {Candidate} & {193,477} & {194,895} & {164,174,202} &{193,477} \\
		\hline
		
		\multirow{3}{*}{\makecell[c]{Yoochoose \\ $k$ = 8000}}
		& {Time} & {35} & {22} & {23} & {27} \\
		\cline{2-2}
		& {Memory} & {2,780} & {3,579} & {2,146} & {2,351} \\
		\cline{2-2}
		& {Candidate} & {180,641} & {281,419} & {832,550} &{204,694} \\
		\hline
		
		\multirow{3}{*}{\makecell[c]{Kosarak \\ $k$ = 9}} 
		& {Time} & {13} & {15} & {/} & {23} \\
		\cline{2-2}
		& {Memory} & {2,279} & {3,540} & {/} & {2,293} \\
		\cline{2-2}
		& {Candidate} & {2,001} & {2,470} & {/} &{2,001} \\
		\hline
		
		\multirow{3}{*}{\makecell[c]{Syn80K \\ $k$ = 200}} 
		& {Time} & {839} & {936} & {/} & {3,267} \\
		\cline{2-2}
		& {Memory} & {5,483} & {5,982} & {/} & {4,184} \\
		\cline{2-2}
		& {Candidate} & {972,569} & {1,004,164} & {/} &{972,569} \\
		\hline
					
		\hline
	\end{tabular}
\end{table*}

\begin{figure*}[ht]
	\centering
	\includegraphics[clip,scale=0.5]{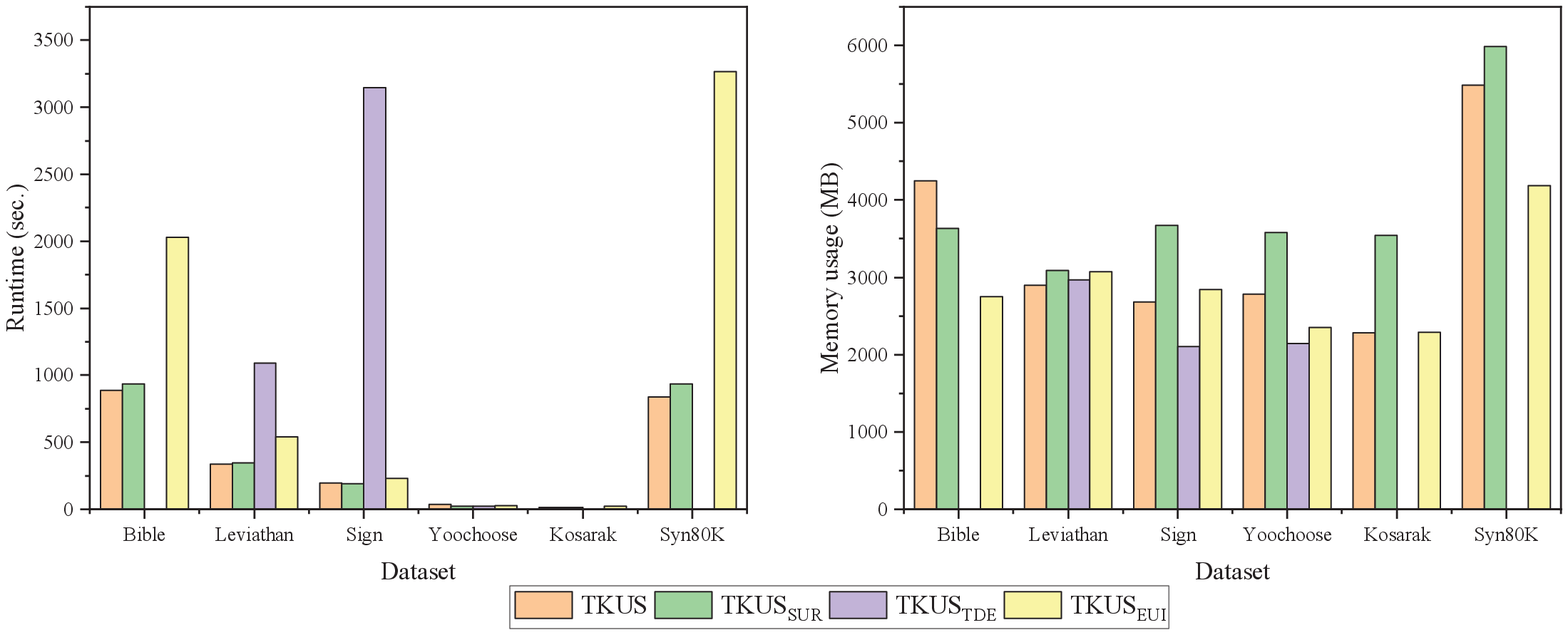}
	\caption{Effectiveness of different strategies utilizing by TKUS}
	\label{effect}
\end{figure*}

In this subsection, to analyze the effects of several designed mining strategies, the number of candidates and performance are evaluated in terms of running time and memory consumption of the algorithms. Moreover, for a fair comparison, we adopted TKUS as the backbone algorithm for effectiveness evaluation. We investigated the proposed variants to examine the impact of each of SUR, TDE, and EUI. The three variants $\rm TKUS_{SUR}$, $\rm TKUS_{TDE}$, $\rm TKUS_{EUI}$ are the TKUS algorithm without the corresponding pruning techniques. We conducted the experiments on six datasets with $k$ set to 200, 500, 500, 8000, 9, and 200, respective to the order they were introduced in. The results are listed in Table \ref{table_effect}. To better visualize the results, we also illustrate the performance of the three variants and TKUS itself in Figure \ref{effect}. It is clear that the variants of TKUS cannot achieve very good performance in most cases. On the Yoochoose dataset, the two variants have a slightly better performance because scanning many small projected databases to calculate the \textit{PEU} and \textit{RSU} values may be time-consuming for frequent disk I/O accesses. It is also clear that TKUS benefits most from the TDE strategy because $\rm TKUS_{TDE}$ generates the most candidates and could not even return the top-$k$ HUSPs in the limited running time (i.e., three hours) on the Bible, Kosarak, and Syn80K datasets. It is worth noting that we compared the results returned by TKUS and its three variants; as expected, the results are the same in the same scenarios, which proves that none of the mining strategies result in missing any top-$k$ HUSPs and ensures the completeness of the TKUS algorithm. In summary, it can be concluded that the three strategies all contribute to the efficiency of TKUS owing to the proposed method outperforming each individually.

\subsection{Candidate and Minimum Utility Threshold Analysis}

\begin{table}[ht]
	\centering
	\caption{Number of candidates generated by compared methods}
	\label{table_threshold}
	\begin{tabular}{|c|c|c|c|c|c|c|c|}
		\hline \textbf{Dataset} & \textbf{Result} & $\mathbf{test_{1}}$ & $\mathbf{test_{2}}$ & $\mathbf{test_{3}}$ & $\mathbf{test_{4}}$ & $\mathbf{test_{5}}$ & $\mathbf{test_{6}}$ \\
		\hline  \hline
		\multirow{4}{*}{Bible} 
		& {$k$} & {100} & {200} & {500} & {1000} & {2000} & {3000} \\
		\cline{2-2}
		& {TKHUS-Span} & {9,091} & {18,540} & {45,796} & {90,824} & {180,052} & {274,127} \\
		\cline{2-2}
		& {TKUS} & {8,405} & {17,059} & {40,887} & {80,567} &  {163,659} & {246,976} \\
		\cline{2-2}
		& {\textit{SSSR}} & {7.55\%} & {7.99\%} & {10.72\%} & {11.29\%} & {9.10\%} & {9.90\%} \\
		\hline
		
		\multirow{4}{*}{Leviathan}
		& {$k$} & {100} & {200} & {500} & {1000} & {2000} & {3000} \\
		\cline{2-2}
		& {TKHUS-Span} & {10,921} & {19,304} & {42,048} & {77,281} & {145,570} & {216,208} \\
		\cline{2-2}
		& {TKUS} & {10,859} & {18,972} & {40,918} & {74,655} &  {138,846} & {203,741} \\
		\cline{2-2}
		& {\textit{SSSR}} & {0.57\%} & {1.72\%} & {2.69\%} & {3.40\%} & {4.62\%} & {5.77\%} \\
		\hline

		\multirow{4}{*}{Sign} 
		& {$k$} & {100} & {200} & {500} & {1000} & {2000} & {3000} \\
		\cline{2-2}
		& {TKHUS-Span} & {90,988} & {125,252} & {194,895} & {280,759} & {458,508} & {693,456} \\
		\cline{2-2}
		& {TKUS} & {90,759} & {124,699} & {193,477} & {276,074} &  {407,596} & {524,255} \\
		\cline{2-2}
		& {\textit{SSSR}} & {0.25\%} & {0.44\%} & {0.73\%} & {1.67\%} & {11.10\%} & {24.40\%} \\
		\hline
		
		\multirow{4}{*}{Yoochoose} 
		& {$k$} & {2000} & {4000} & {6000} & {8000} & {10000} & {12000} \\
		\cline{2-2}
		& {TKHUS-Span} & {4,994} & {56,299} & {173,900} & {281,419} & {381,123} & {482,225} \\
		\cline{2-2}
		& {TKUS} & {4,329} & {47,856} & {97,820} & {180,641} &  {314,307} & {431,406} \\
		\cline{2-2}
		& {\textit{SSSR}} & {13.32\%} & {15.00\%} & {43.75\%} & {35.81\%} & {17.53\%} & {17.53\%} \\
		\hline
		
		\multirow{4}{*}{Kosarak} 
		& {$k$} & {9} & {10} & {11} & {12} & {13} & {14} \\
		\cline{2-2}
		& {TKHUS-Span} & {2,470} & {3,358} & {3,751} & {4,542} & {142,502} & {387,811} \\
		\cline{2-2}
		& {TKUS} & {2,001} & {2,821} & {3,123} & {3,757} &  {5,183} & {6,022} \\
		\cline{2-2}
		& {\textit{SSSR}} & {18.99\%} & {15.99\%} & {16.74\%} & {17.28\%} & {96.36\%} & {98.45\%} \\
		\hline
		
		\multirow{4}{*}{Syn80K} 
		& {$k$} & {100} & {200} & {500} & {1000} & {2000} & {3000} \\
		\cline{2-2}
		& {TKHUS-Span} & {355,442} & {1,004,164} & {3,133,862} & {5,836,250} & {9,584,865} & {12,783,180} \\
		\cline{2-2}
		& {TKUS} & {341,889} & {972,569} & {2,993,034} & {5,782,992} &  {9,583,074} & {12,782,996} \\
		\cline{2-2}
		& {\textit{SSSR}} & {3.81\%} & {3.15\%} & {4.49\%} & {0.91\%} & {0.02\%} & {0.01\%} \\
		\hline

		\hline
	\end{tabular}
\end{table}

\begin{figure}[ht]
	\centering
	\includegraphics[clip,scale=0.7]{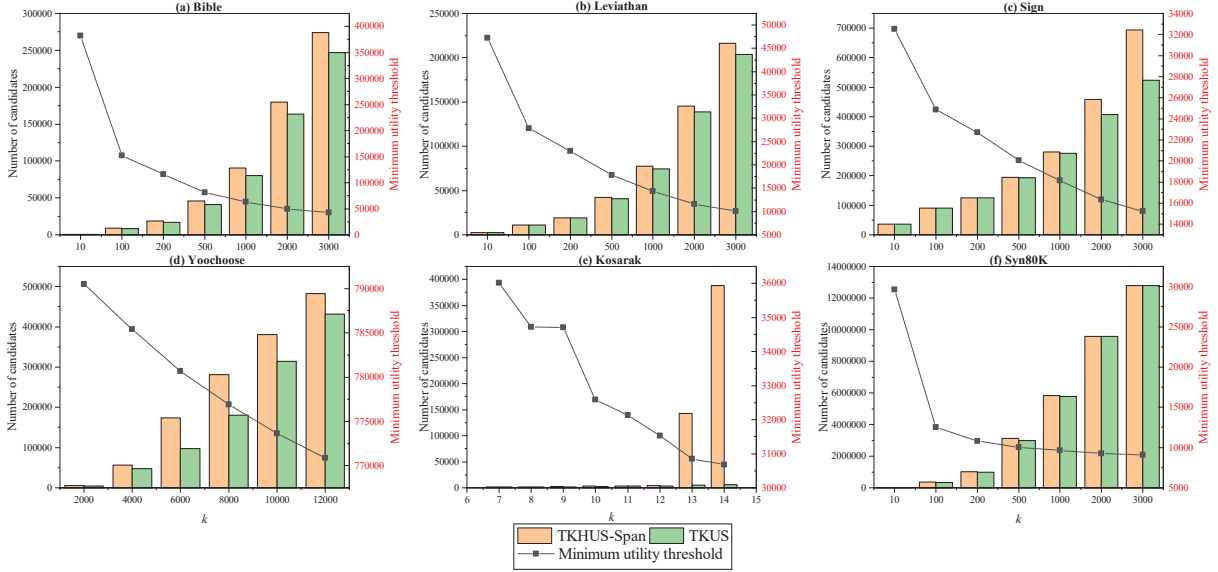}
	\caption{Number of candidates and final minimum utility threshold of top-$k$ HUSPs}
	\label{threshold}
\end{figure}

First, the numbers of candidates of the two algorithms are evaluated, which is a crucial measure of the search space. The results are listed in Table \ref{table_threshold}. To better visualize the results, we also illustrate related results in Figure \ref{threshold} with the final minimum utility thresholds. In particular, we also define a novel metric called search space shrinkage rate (\textit{SSSR}) to evaluate how much TKUS can reduce the search space compared to TKHUS-Span. Given a $q$-sequence dataset $D$ and desired number of HUSPs $k$, assume that $\textit{Num}_{\textit{TKHUS-Span}}$ and $\textit{Num}_{\textit{TKUS}}$ are the numbers of candidates generated in the mining processes of TKHUS-Span and TKUS, respectively. The \textit{SSSR} value is define as $\textit{SSSR}(D,k)$ = $\frac{\textit{Num}_{\textit{TKHUS-Span}} - \textit{Num}_{\textit{TKUS}}}{\textit{Num}_{\textit{TKHUS-Span}}}$. 

From a macro perspective, it is clear that the number of candidates mainly depends on the characters of the datasets. More specially, the more $q$-sequences or longer the average $q$-sequence, the more candidates will be generated by the algorithms. In addition, the quantity of candidates becomes larger with increasing $k$. Moreover, TKUS generates less candidates than TKHUS-Span in all cases because the projection and local search mechanism as well as the three mining strategies make a vital contribution to reducing the search space and improving the efficiency. In terms of the \textit{SSSR} metric, TKUS has better performance than TKHUS-Span, especially on the Bible, Leviathan, Yoochoose, and Kosarak datasets because the metric values are all larger than 5\%. Surprisingly, our designed TKUS can avoid scanning more than 98 percent of the search space of TKHUS-Span on the Kosarak dataset when setting $k$ to 14. Furthermore, as shown in Figure \ref{effect}, the final minimum utility threshold shows a decline following the increasing of $k$ in all datasets. In summary, we can conclude that the developed local search mechanism and strategies greatly limit the scope of scanning and reduce the number of candidates significantly.

\subsection{Efficiency Analysis}

The efficiency of the algorithm is measured by its execution time considering not only the running time used by the CPU but also the access time required for disk input/output. The execution times of the TKHUS-Span algorithm compared to the proposed TKUS under different $k$ on the six datasets are shown in Figure \ref{runtime}. From the experimental results, it can be seen that the TKUS algorithm performs better overall for all datasets. In particular, with increasing $k$ on the Kosarak dataset, the execution time of TKHUS-Span dramatically increases when $k$ becomes greater than 12, whereas that of TKUS remains relatively stable. This is expected as the proposed strategies help to improve the efficiency of TKUS. However, the two algorithms take negligible time to finish top-$k$ HUSPM with a comparatively small value of $k$. Moreover, TKUS has bad performance on the Yoochoose dataset, and we speculate that the massive calculations of 1- and 2-sequences ahead of the mining incurred very high computational costs. On the other four datasets, the execution time increases with increasing $k$, where the growth curve clearly resembles a logarithmic function. The difference is that TKHUS-Span requires more running time with the same value of $k$, and the gap grows with increasing $k$. This demonstrates that the developed projection and local search mechanism and several strategies play a key role in improving the performance.

\begin{figure}[ht]
	\centering
	\includegraphics[clip,scale=0.7]{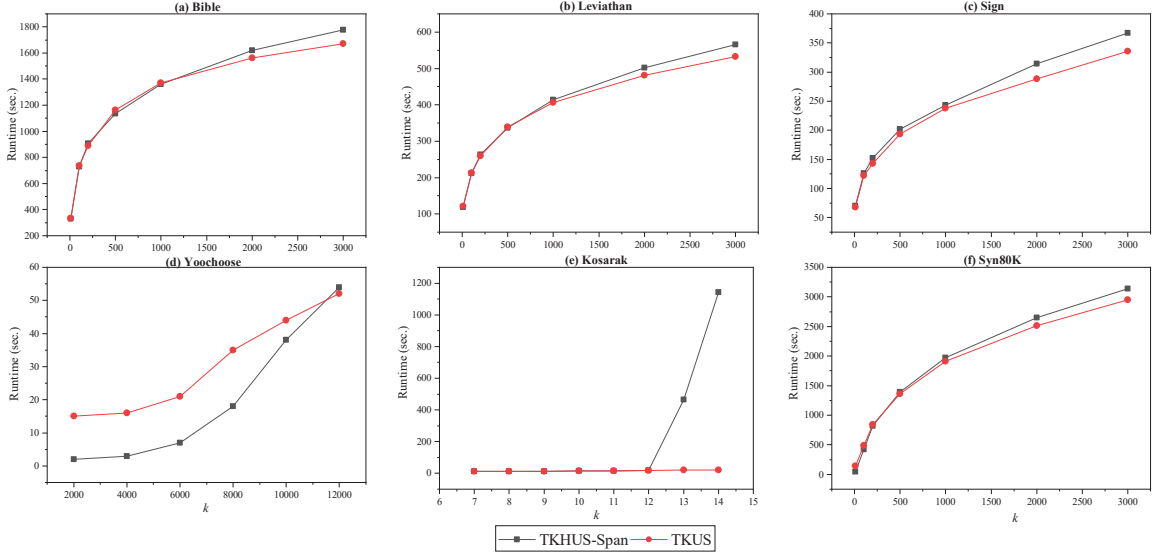}
	\caption{Runtime for various $k$}
	\label{runtime}
\end{figure}

\subsection{Memory Evaluation}

Moreover, TKUS outperforms TKHUS-Span generally in terms of memory usage. The results for the memory usage are illustrated in Figure \ref{memory}. It can clearly be seen that TKUS has better performance than TKHUS-Span in terms of memory usage in most cases. As can be seen from the results on Kosarak and Syn80k, the memory consumed by TKUS is less than that by TKHUS-Span in all instances because the proposed TKUS is able to avoid handling a large number of candidates in the mining process. In addition, there are few cases in the first four datasets where TKUS utilizes more memory, which is probably because TKUS requires more storage space to save the utility and \textit{PEU} values of all 1- and 2-sequences with the SUR strategy. In general, the memory consumed becomes larger with increasing $k$. However, there are some exceptions. Consider the example in Figure \ref{memory}; the memory usage on the Yoochoose dataset is only slightly less than before when setting $k$ = 10000. As another example, on the dataset Sign, the memory consumption sharply decreases as $k$ passes 1000. On the whole, TKUS adopts efficient strategies to significantly improve performance in terms of memory usage. Nonetheless, TKUS still has much room for improvement in terms of memory consumption, especially for large $k$.

\begin{figure}[ht]
	\centering
	\includegraphics[clip,scale=0.7]{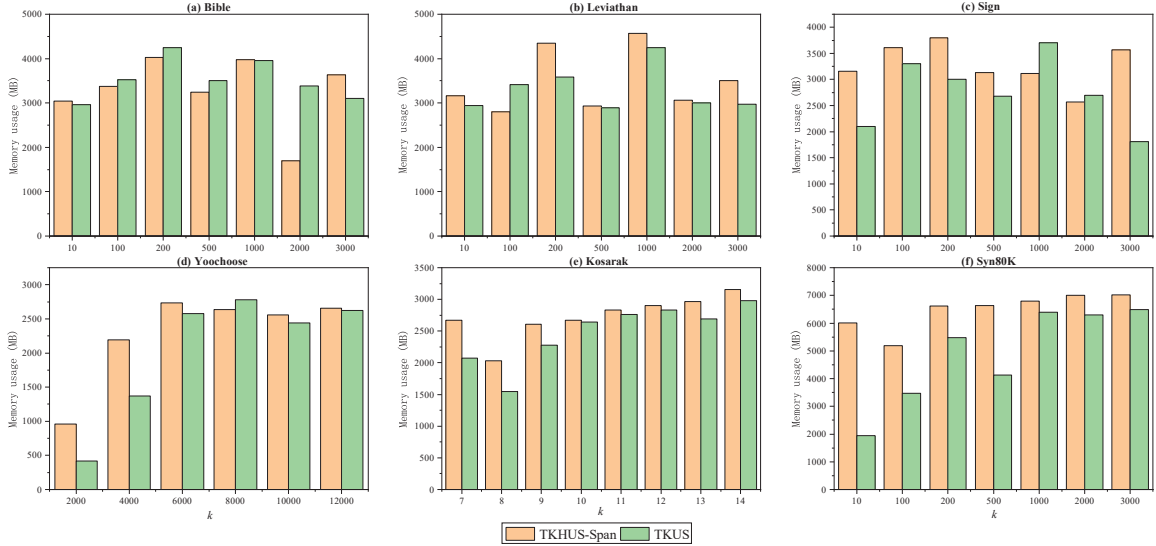}
	\caption{Memory usage for varying $k$}
	\label{memory}
\end{figure}

\subsection{Scalability}

We conducted experiments to evaluate the scalability of TKUS on a synthetic database. We increased its data size through duplication with the purpose of obtaining a series of datasets with different sizes varying from 10K to 120K. The experimental results in terms of execution time and memory with $k$ set to 500 are shown in Figure \ref{scalability}. It is clear that TKHUS-Span has a worse performance than TKUS on the two metrics. As can be seen, the execution time increases linearly as the number of $q$-sequences contained in the databases grows. On the dataset with 120K size, the execution times of the two algorithms are less than those on the dataset with 80K $q$-sequences. In terms of memory usage, there is also a slight decrease when the algorithm addresses the dataset with 120K size because its unique distribution of utility values leads the SUR strategy to increase the minimum utility threshold significantly. From Figure \ref{scalability}, we can conclude that the TKUS algorithm is scalable to large-scale datasets because the running time and memory consumption are almost linearly related to dataset size.

\begin{figure}[ht]
	\centering
	\includegraphics[clip,scale=0.52]{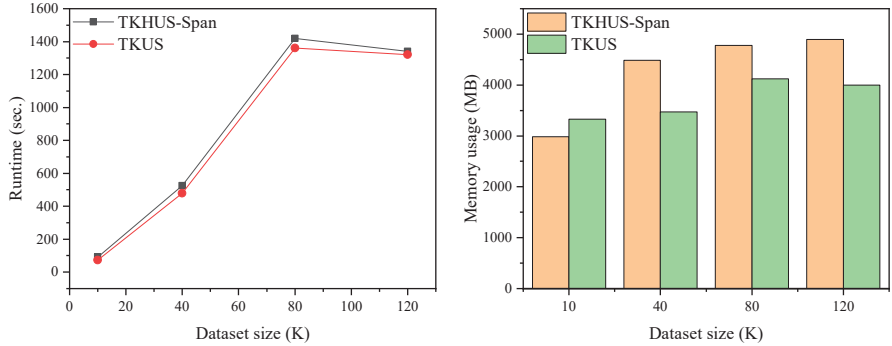}
	\caption{Scalability of the compared methods when $k$ = 500}
	\label{scalability}
\end{figure}

\section{Conclusions}  
\label{sec:conclusion}

In this paper, we first defined some key notations and concepts of SPM and formulated the problem of top-$k$ HUSPM. To handle top-$k$ HUSPM efficiently, we then proposed a novel algorithm called TKUS. For further efficiency improvement, TKUS adopts a projection and local search mechanism and follows several strategies, including SUR, TDE, and EUI, which are able to greatly reduce the search space and speed up the mining process. Finally, we conducted substantial experiments on six datasets with varied characteristics. The experimental results show that TKUS has good performance in terms of execution time, number of candidates, scalability, and more. We can conclude that TKUS is able to increase the minimum utility threshold quickly and extract the top-$k$ HUSPs efficiently. Our future work will apply the proposed TKUS algorithm to big data. For example, several extensions of the TKUS algorithm can be considered, such as adopting parallel techniques, such as Hadoop, to increase its speed.

\section*{Acknowledgment}

Thanks for the anonymous reviewers for their insightful comments, which  improved the quality of this paper. This research was partially supported by the National Natural Science Foundation of China (Grant No. 62002136), Natural Science Foundation of Guangdong Province, China (Grant No. 2020A1515010970), and Shenzhen Research Council (Grant No. GJHZ20180928155209705)

\section*{References}
\bibliographystyle{model1-num-names}
\bibliography{main} 


\end{document}